\title{Generalizing Level Ranking Constraints \\
for Monotone and Convex Aggregates}
\author{Tomi Janhunen
        \institute{Computing Sciences, Tampere University, Finland}
        \institute{Department of Computer Science, Aalto University, Finland}
	\email{tomi.janhunen@tuni.fi}}
\newtheorem{theorem}{Theorem}
\newtheorem{definition}{Definition}
\newtheorem{proposition}{Proposition}
\newtheorem{example}{Example}
\newcommand{\eofex}{\mbox{}\nobreak\hfill\hspace{0.5em}$\blacksquare$}
\newcommand{\union}{\cup}
\newcommand{\Union}{\bigcup}
\newcommand{\isect}{\cap}
\newcommand{\pair}[2]{\langle{#1},{#2}\rangle}
\newcommand{\tuple}[1]{\langle{#1}\rangle}
\newcommand{\rg}[3]{#1#2\dots#2#3}
\newcommand{\set}[1]{\{{#1}\}}
\newcommand{\eset}[2]{\{{#1},\ldots,{#2}\}}
\newcommand{\sel}[2]{\{{#1}\mid{#2}\}}
\newcommand{\compl}[1]{\overline{#1}}
\newcommand{\fdef}[3]{{#1}:{#2}\rightarrow{#3}}
\newcommand{\pset}[1]{\mathbf{2}^{#1}}
\newcommand{\nat}{\mathbb{N}}
\newcommand{\integers}{\mathbb{Z}}
\newcommand{\Land}{\bigwedge}
\newcommand{\Lor}{\bigvee}
\newcommand{\true}{\top}
\newcommand{\false}{\bot}
\newcommand{\lequiv}{\leftrightarrow}
\newcommand{\limpl}{\rightarrow}
\newcommand{\sig}[1]{\mathrm{At}(#1)}
\newcommand{\vars}{\mathcal{V}}
\newcommand{\heads}[1]{\mathrm{H}(#1)}
\newcommand{\sigv}[1]{\mathrm{At_v}(#1)}
\newcommand{\sigh}[1]{\mathrm{At_h}(#1)}
\DeclareMathSymbol{\naf}{\mathord}{symbols}{"18}
\newcommand{\IF}{\leftarrow}
\newcommand{\AND}{,}
\newcommand{\OR}{\mid}
\newcommand{\END}{.\;}
\newcommand{\GLred}[2]{#1^{#2}}
\newcommand{\wght}[1]{w_{#1}}
\newcommand{\wsum}[2]{\mathrm{WS}_{#1}(#2)}
\newcommand{\len}[1]{\|#1\|}
\newcommand{\sm}[1]{\mathrm{SM}(#1)}
\newcommand{\supp}[2]{\mathrm{SuppR}(#1,#2)}
\newcommand{\suppm}[1]{\mathrm{SuppM}(#1)}
\newcommand{\lm}[1]{\mathrm{LM}(#1)}
\newcommand{\mm}[2]{\mathrm{MM}_{#1}(#2)}
\newcommand{\system}[1]{\textsc{#1}}
\newcommand{\imt}[1]{\mathbf{T}_{#1}}
\newcommand{\iter}[3]{{#1}\uparrow^{#2}(#3)}
\newcommand{\level}[1]{\##1}
\newcommand{\pdep}[1]{\mathbin{\succeq}_{#1}}
\newcommand{\dgplus}[1]{\mathrm{DG}^+(#1)}
\newcommand{\aggr}[2]{\mathrm{Aggr}^{#1}(#2)}
\newcommand{\csubst}[3]{#1\,?\,#2\!:\!#3}
\newcommand{\tr}[3]{\mathrm{#1}^{#2}(#3)}
\newcommand{\trop}[2]{\mathrm{#1}^{#2}}
\newcommand{\defof}[2]{\mathrm{Def}_{#2}(#1)}
\newcommand{\app}[2]{\mathrm{app}^{#2}(#1)}
\renewcommand{\int}[2]{\mathrm{int}^{#2}(#1)}
\newcommand{\ext}[2]{\mathrm{ext}^{#2}(#1)}
\newcommand{\dep}[2]{\mathrm{dep}(#1,#2)}
\newcommand{\gap}[2]{\mathrm{gap}(#1,#2)}
\newcommand{\vub}[2]{\mathrm{vub}^{#2}(#1)}
\newcommand{\sccof}[1]{\mathrm{SCC}(#1)}
\newcommand{\body}[1]{\mathrm{B}(#1)}
\newcommand{\head}[1]{\mathrm{H}(#1)}
\newcommand{\pbody}[1]{\mathrm{B}^+(#1)}
\newcommand{\nbody}[1]{\mathrm{B}^-(#1)}
\newcommand{\lpeq}[1]{\equiv_{\mathrm{#1}}}
\newcommand{\citep}[1]{\cite{#1}}
\newcommand{\tconly}[1]{#1}
\begin{document}
\maketitle

\begin{abstract}
In answer set programming (ASP), answer sets capture solutions to
search problems of interest and thus the efficient computation of
answer sets is of utmost importance. One viable implementation
strategy is provided by translation-based ASP where logic programs
are translated into other KR formalisms such as Boolean
satisfiability (SAT), SAT modulo theories (SMT), and mixed-integer
programming (MIP). Consequently, existing solvers can be harnessed
for the computation of answer sets.
Many of the existing translations rely on program completion and level
rankings to capture the minimality of answer sets and default negation
properly. In this work, we take level ranking constraints into
reconsideration, aiming at their generalizations to cover
aggregate-based extensions of ASP in more systematic way.
By applying a number of program transformations, ranking constraints
can be rewritten in a general form that preserves the structure of
monotone and convex aggregates and thus offers a uniform basis for
their incorporation into translation-based ASP. The results open up
new possibilities for the implementation of translators and solver
pipelines in practice.
\end{abstract}


\section{Introduction}

Answer set programming (ASP, \cite{BET11:cacm}) offers rich rule-based
languages for knowledge representation (KR) and reasoning. Given some
search or optimization problem of interest, its \emph{encoding} in ASP
is a logic program whose answer sets capture solutions to the
problem. Thus the efficient computation of answer sets is of utmost
importance. One viable implementation strategy is provided by
\emph{translation-based} ASP where logic programs are translated into
other KR formalisms such as
Boolean satisfiability (SAT, \cite{BHvMW21:handbook}),
SAT modulo theories (SMT, \cite{BSST21:faia}), or
mixed-integer programming (MIP, \cite{Wolsey08:book}).
Consequently, existing solver technology can be harnessed for the
computation of answer sets.

The semantics of answer set programming rests on \emph{stable models}
\citep{GL88:iclp} that incorporate a notion of minimality and give a
declarative semantics for default negation. Capturing these aspects in
satisfaction-based formalisms such as pure SAT is non-trivial; see,
e.g., \citep{Janhunen06:jancl,LR06:tocl}.
There are also various syntactic aggregations \citep{AFG23:tplp} that
enable compact encodings but whose translation is potentially expensive
if there is no respective primitive in the target formalism.
A typical translation consists of several steps such as
(i) \emph{normalization} \citep{BJN20:acmtocl},
(ii) \emph{instrumentation} for loop prevention
\citep{BGJKS16:fi,Janhunen06:jancl,LZ04:aij}, and
(iii) \emph{completion} \citep{Clark78:ldb}.
The first step concerns the removal of syntactic extensions that have
been introduced to increase the expressive power of ASP in favor of
\emph{normal} rules. The second step either transforms the program or
adds suitable constraints so that the difference between stable and
\emph{supported models} disappears. The third step captures supported
models by transforming rules into equivalences.
Ideally, the syntactic details of the target language are deferred
during translation and incorporated at the very end; either after or
while forming the completion. This strategy realizes a
\emph{cross-translation} approach \citep{Janhunen18:ki} in analogy to
modern compiler designs.

Many of the existing translations
\citep{JNS09:lpnmr,LJN12:kr,NJN11:inap}
essentially rely on \emph{level ranking constraints}
formalized by \tconly{Niemel\"a} \cite{Niemela08:amai} as formulas in
difference logic \citep{NOT06:jacm}. Such constraints describe
\emph{level numbers} that order the atoms of a normal program in such
a way that stable models can be distinguished among the supported ones
\citep{Fages94:jmlcs}. Thus, level numbers are essential when it comes
to capturing the minimality of stable models and the semantics of
default negation properly.
As shown in \citep{Janhunen06:jancl}, level numbers can be made unique
so that they match with the levels of atoms obtained by applying the
\emph{immediately true} operator $\imt{P}$ iteratively. Uniqueness can
also be enforced in terms of \emph{strong} level ranking constraints
\citep{Niemela08:amai}. Unique level numbers are also highly desirable
when aiming at one-to-one correspondences with stable models, e.g.,
when counting solutions to problems or carrying out probabilistic
inference \citep{FBRSGTJR15:tplp}.

In this work, we take level ranking constraints into reconsideration,
aiming at generalizations that cover aggregate-based extensions of ASP
in a more systematic way. So far, only normal programs are truly covered
\citep{Janhunen06:jancl,Niemela08:amai}
and the normalization of input programs is presumed. The
generalization for weight constraint programs (WCPs), as sketched by
\tconly{Liu et al.}~\cite{LJN12:kr}, concerns only weak constraints
and is confined to translations into MIP.
However, the idea of avoiding or delaying normalization is interesting
as such, opening up new possibilities for ordering the translation
steps discussed above. For instance, if $\tr{NORM}{}{\cdot}$ and
$\tr{LRC}{}{\cdot}$ stand for translations based on normalization and
level ranking constraints, respectively, it would be highly
interesting to compare $\tr{LRC}{}{\tr{NORM}{}{P}}$ with potential
generalizations $\tr{LRC}{}{P}$ that express level ranking
constraints at an aggregated level. Such representations are expected
to be more compact and to favor level rankings with fewer
variables. The resulting formulas can also be \emph{Booleanized}
afterwards \citep{Huang08:cp}, if translations toward SAT are
desirable, or rewritten in some other form that complies with the
intended back-end formalism.
In the sequel, we use translations $\tr{LRC}{}{\tr{NORM}{}{P}}$ of
increasingly complex programs $P$ as guidelines when lifting level
ranking constraints for aggregates. The idea is to cover program
classes involving standard aggregations subject to recursion. It turns
out that the structure of monotone and convex aggregates can be
preserved to a high degree, offering a uniform basis for their
incorporation into translation-based ASP.
On the one hand, the resulting generalizations exploit
\emph{ordered completion} \citep{ACZZ15:aij} in the reformulation
of \emph{weak} level ranking constraints but, on the other hand,
make novel contribution when imposing the uniqueness of level
rankings with \emph{strong} ones.

The rest of this article is structured as follows.
We begin by recalling the basic notions of \emph{logic programs} in
Section~\ref{section:preliminaries}, including the usual syntactic
fragments, stable and supported model semantics, and other concepts
relevant for this work.
Then, in Section~\ref{section:normal-ranking-constraints}, we explain
the details of ranking constraints in their standard form
corresponding to \emph{normal} logic programs. Actually, we present
them in a slightly rewritten form in order to pave the way for their
generalization for monotone aggregates in
Section~\ref{section:monotone-aggregates}.
Therein, we begin the analysis from the case of (positive) cardinality
and weight rules and, eventually, incorporate negative conditions to
ranking constraints.
To illustrate the generality of the results even further, we investigate
how certain convex aggregates get also covered via appropriate
program transformations in Section~\ref{section:convex-aggregates}.
Finally, the conclusions of this work are presented in
Section~\ref{section:conclusions}.

\section{Preliminaries}
\label{section:preliminaries}

In the sequel, we will consider \emph{propositional logic programs} that
are finite sets rules of the forms
\eqref{eq:normal-rule}--\eqref{eq:disjunctive-rule}
below. In the rules, $a$, $a_i$'s, $b_j$'s, and $c_k$'s are
\emph{propositional atoms} (or \emph{atoms} for short) and $\naf$
denotes {\em default negation}. The rules in
\eqref{eq:normal-rule}--\eqref{eq:disjunctive-rule}
are known as \emph{normal}, \emph{choice}, \emph{cardinality}, 
\emph{weight}, and \emph{disjunctive} rules, respectively.
Each rule
has a \emph{head} and a \emph{body} separated by the $\IF$ sign, and
the rough intuition is that if the \emph{condition(s)} formed by the
rule body are satisfied, then the respective head atom $a$ in
\eqref{eq:normal-rule}--\eqref{eq:weight-rule}, or some of the head
atoms $\rg{a_1}{,}{a_h}$ in \eqref{eq:disjunctive-rule} can be derived.

\begin{eqnarray}
\label{eq:normal-rule}
a & \IF & \rg{b_1}{\AND}{b_n}\AND\rg{\naf c_1}{\AND}{\naf c_m}\END
\\
\label{eq:choice-rule} 
\set{a}
& \IF & \rg{b_1}{\AND}{b_n}\AND\rg{\naf c_1}{\AND}{\naf c_m}\END
\\
\label{eq:cardinality-rule}
a & \IF & l\leq\set{\rg{b_1}{,}{b_n},\rg{\naf c_1}{,}{\naf c_m}}\END
\\
\label{eq:weight-rule}
a & \IF & w\leq\set{\rg{b_1=\wght{b_1}}{,}{b_n=\wght{b_n}},
                    \rg{\naf c_1=\wght{c_1}}{,}{\naf c_m=\wght{c_m}}}\END
\\
\label{eq:disjunctive-rule}
\rg{a_1}{\OR}{a_h}
& \IF & \rg{b_1}{\AND}{b_n}\AND\rg{\naf c_1}{\AND}{\naf c_m}\END
\end{eqnarray}
The choice regarding the head of \eqref{eq:choice-rule} is optional
while \eqref{eq:disjunctive-rule} insists on deriving at least one
head atom $a_i$ in a minimal way, as to be detailed in
Definition~\ref{def:stable-model}.
A positive body condition $b_j$ holds if $b_j$ can be derived by
some other rules whereas $\naf c_k$ holds if $c_k$ cannot be derived.
A cardinality rule \eqref{eq:cardinality-rule} demands that at least $l$
of such conditions are met to activate the rule. Weight rules
\eqref{eq:weight-rule} are similar but body conditions $b_j$ and $\naf
c_k$ are valued by their respective non-negative integer weights
$\wght{b_j}$ and $\wght{c_k}$ when it comes to reaching the bound $w$.
In the sequel, we use shorthands
$\pbody{r}=\eset{b_1}{b_n}$,
$\nbody{r}=\eset{c_1}{c_m}$, and
$\body{r}=\eset{b_1}{b_n}\union\eset{\naf c_1}{\naf c_m}$
when referring to the body conditions occurring in a rule $r$. The set
of head atoms in $r$ is denoted by $\head{r}$ and for entire programs
$P$, we define $\heads{P}=\Union_{r\in P}\head{r}$.

Typical (syntactic) classes of logic programs are as follows:
\emph{normal} logic programs (NLPs) consist of normal rules
\eqref{eq:normal-rule} and the same can be stated about
\emph{disjunctive} logic programs (DLPs) and disjunctive rules
\eqref{eq:disjunctive-rule} that are normal as a special case ($h=1$).
The class of \emph{weight constraint programs} (WCPs)
\citep{SNS02:aij} is essentially based on normal rules
\eqref{eq:normal-rule} and the aggregated rule types in
\eqref{eq:choice-rule}--\eqref{eq:weight-rule}, out of which weight
rules are expressive enough to represent the class of WCPs alone.
Contemporary ASP systems---aligned with the ASP-core-2 language standard
\citep{CFGIKKLRS13:aspcore2}---support these fragments so well that
programmers can mix rule types freely in their encodings.
When the fragment is not important, we may refer to
\emph{logic programs} or \emph{programs} for short.
Finally, we say that a rule is \emph{positive}%
\footnote{Note that the head $\set{a}$ of choice rule embeds hidden
(double) negation since it can be expressed as $a\IF\naf\naf a$.}
if $m=0$ and it is of the forms \eqref{eq:normal-rule},
or \eqref{eq:cardinality-rule}--\eqref{eq:disjunctive-rule}.
An entire program is called positive if its rules are all positive.

The \emph{signature} $\sig{P}$ of a logic program $P$ is the
set of atoms that occur in the rules of $P$.
An {\em interpretation} $I\subseteq\sig{P}$ of $P$ tells which atoms
$a\in\sig{P}$ are \emph{true} ($a\in I$, also denoted $I\models a$) whereas
others are \emph{false} ($a\in\sig{P}\setminus I$, denoted $I\not\models
a$). Atoms are also called \emph{positive literals}. Any
\emph{negative literal} $\naf c$, where $c$ is an atom, is understood
classically, i.e., $I\models\naf c$, iff $I\not\models c$.
The relation $\models$ extends for the bodies of
normal/choice/disjunctive rules $r$ as follows:
$I\models\body{r}$
iff $\pbody{r}\subseteq I$ and $\nbody{r}\isect I=\emptyset$.
The body $l\leq\body{r}$ of a cardinality rule $r$ is satisfied in $I$
iff $l\leq |\pbody{r}\isect I|+|\nbody{r}\setminus I|$. More generally,
the body of a weight rule $r$ in \eqref{eq:weight-rule} is satisfied
in $I$ iff the \emph{weight sum}
$
\wsum{I}{\rg{b_1=\wght{b_1}}{,}{b_n=\wght{b_n}},
          \rg{\naf c_1=\wght{c_1}}{,}{\naf c_m=\wght{c_m}}}
$
$=$
$\sum_{b\in\pbody{r}\isect I}\wght{b}+\sum_{c\in\nbody{r}\setminus I}\wght{c}$
is at least $w$. For rules $r$, we have $I\models r$ iff the
satisfaction of its body implies the satisfaction of its head, except
that choice rules \eqref{eq:choice-rule} are always satisfied.
An interpretation $I\subseteq\sig{P}$ is a (\emph{classical})
\emph{model} of a program $P$, denoted $I\models P$, iff $I\models r$
for each $r\in P$.
A model $M\models P$ is $\subseteq$-\emph{minimal} iff there is no
$M'\models P$ such that $M'\subset M$. The set of $\subseteq$-minimal
models of $P$ is denoted by $\mm{}{P}$.  If $P$ is positive and
non-disjunctive then $|\mm{}{P}|=1$ and the respective
\emph{least model} of $P$ is denoted by $\lm{P}$.

\begin{definition}[Stable models \citep{GL88:iclp,GL91:ngc,SNS02:aij}]
\label{def:reduct}\label{def:stable-model}
For a program $P$ and an interpretation $I\subseteq\sig{P}$, the
\emph{reduct} $\GLred{P}{I}$ of $P$ with respect to $I$ contains
\begin{enumerate}
\item
a rule $a\IF\pbody{r}$ for each normal rule \eqref{eq:normal-rule}
such that $\nbody{r}\isect I=\emptyset$, and for each choice rule
\eqref{eq:choice-rule} such that $a\in I$ and $\nbody{r}\isect I=\emptyset$;

\item
a rule $a\IF l'\leq\pbody{r}$ for each cardinality rule
\eqref{eq:cardinality-rule} and the bound $l'=\max(0,l-|\nbody{r}\setminus I|)$;

\item
a rule $a\IF w'\leq\set{\rg{b_1=\wght{b_1}}{,}{b_n=\wght{b_n}}}$ for each
weight rule \eqref{eq:weight-rule} and the bound
$w'=\max(0,w-\wsum{I}{\rg{\naf c_1=\wght{c_1}}{,}{\naf c_m=\wght{c_m}}})$; and

\item
a rule $\rg{a_1}{\OR}{a_h}\IF\pbody{r}$ for each disjunctive rule
\eqref{eq:disjunctive-rule} such that $\nbody{r}\isect I=\emptyset$.
\end{enumerate}
An interpretation $M\subseteq\sig{P}$ is a~\emph{stable model} of
the program $P$ iff $M\in\mm{}{\GLred{P}{M}}$.
\end{definition}

\begin{example}
\label{ex:positive-card}
Consider a cardinality rule $a\IF 1\leq\set{\rg{b_1}{\AND}{b_n}}$
with choice rules $\rg{\set{b_1}\END}{}{\set{b_n}\END}$
Besides the empty stable model $\emptyset$, these rules induce $2^n-1$
stable models $M=\set{a}\union N$ with $\emptyset\subset
N\subseteq\set{\rg{b_1}{,}{b_n}}$: the head $a$ is set true whenever
\emph{at least} one of $\rg{b_1}{,}{b_n}$ is chosen to be true.
\eofex
\end{example}

In the sequel, we mostly concentrate on non-disjunctive programs
$P$. Then, the stability of $M\subseteq\sig{P}$ can also be captured
with the fixed point equation $M=\lm{\GLred{P}{M}}$. Moreover, the
well-known $\imt{P}$ operator, when applied to an interpretation
$I\subseteq\sig{P}$, produces the set of atoms $a\in\sig{P}$ that are
\emph{immediately true} under $I$, i.e., for which there is a positive
rule $r$ having $a$ as the head and whose body is satisfied by $I$.
It follows that $M\models P$ holds for a non-disjunctive program $P$
iff $\imt{\GLred{P}{M}}(M)\subseteq M$ and $M$ is a
\emph{supported model} of $P$ iff $M=\imt{\GLred{P}{M}}(M)$.
Given a supported model $M$, the support is provided by the set of
rules $\supp{P}{M}\subseteq P$ whose bodies are satisfied by $M$.
Since $\lm{\GLred{P}{M}}$ is obtained as the least fixed point
$\iter{\imt{\GLred{P}{M}}}{\infty}{\emptyset}$, each stable
model of $P$ is also supported.
We write $\sm{P}$ and $\suppm{P}$ for the sets of stable and supported
models of $P$, respectively. Thus $\sm{P}\subseteq\suppm{P}$ holds
in general.

Next we recall some concepts related to modularity. First, given a WCP
$P$, the set of \emph{defining rules} for an atom $a\in\heads{P}$ is
$\defof{a}{P}=\sel{r\in P}{a\in\head{r}}$.  Thus $P$ can be
partitioned as $\Union_{a\in\heads{P}}\defof{a}{P}$.
Second, the \emph{positive dependency graph} of $P$ is
$\dgplus{P}=\tuple{\sig{P},\pdep{P}}$ where $a\pdep{P} b$ holds for
$a,b\in\sig{P}$, if $a\in\head{r}$ and $b\in\pbody{r}$ for some rule
$r\in\defof{a}{P}$. A \emph{strongly connected component} (SCC) of
$\dgplus{P}$ is a maximal subset $S\subseteq\sig{P}$ such that all
distinct atoms $a,b\in S$ depend on each other via directed paths in
$\dgplus{P}$. For an atom $a\in\heads{P}$, the SCC of $a$ is denoted
by $\sccof{a}$.
As shown in \citep{OJ08:tplp}, each SCC $S$ of a WCP $P$ gives rise to
a \emph{program module} $P_S=\Union_{a\in S}\defof{a}{P}$ where pure
body atoms $b\in\sig{P_S}\setminus S$ are treated as \emph{input atoms}
taking any truth value, intuitively defined by choice rules
$\set{b}$. This yields a set of stable models $\sm{P_S}$ for
each module $P_S$ based on Definition \ref{def:stable-model}.
Given two stable models $M\in\sm{P}$ and $N\in\sm{Q}$, we say that $M$
and $N$ are mutually \emph{compatible}, if they agree on the truth
values of atoms in $\sig{P}\isect\sig{Q}$, i.e.,
$M\isect\sig{Q}=N\isect\sig{P}$. The \emph{module theorem} of
\cite{OJ08:tplp} states that the stable models of $P$ can be obtained
as mutually compatible collections of stable models
$\rg{M_1}{,}{M_n}$ for the program modules $\rg{P_{S_1}}{,}{P_{S_n}}$
induced by the SCCs $\rg{S_1}{,}{S_n}$ of $P$.

Finally, some notions of equivalence should be introduced. Logic
programs $P$ and $Q$ are \emph{weakly equivalent}, denoted
$P\lpeq{}Q$, iff $\sm{P}=\sm{Q}$. They are \emph{strongly equivalent},
denoted $P\lpeq{s}Q$, iff $P\union R\lpeq{}Q\union R$ for any other
context program $R$ \citep{LPV01:acmtocl}. Then $P\lpeq{s}Q$ implies
$P\lpeq{}Q$ but not vice versa. Strong equivalence can be characterized
by using only contexts formed by \emph{unary} positive rules $a\IF b$, or
semantically by using SE-models \citep{Turner03:tplp}.
To address the correctness of various translations, however, more
fine-grained relations become necessary. The signature $\sig{P}$ of a
logic program $P$ can be split into \emph{visible} and \emph{hidden}
parts $\sigv{P}$ and $\sigh{P}$, respectively.  Given a stable model
$M\in\sm{P}$, only its visible projection $M\isect\sigv{P}$ is
relevant when comparing $P$ with other programs. Thus, $P$ and $Q$ are
\emph{visibly equivalent}, denoted $P\lpeq{v}Q$, iff
$\sigv{P}=\sigv{Q}$ and $M\isect\sigv{P}=N\isect\sigv{Q}$ holds for
each pair of models $M\in\sm{P}$ and $N\in\sm{Q}$ in a bijective
correspondence \citep{Janhunen06:jancl}. There is a generalization of
both $\lpeq{v}$ and $\lpeq{s}$, viz.~\emph{visible strong equivalence}
$\lpeq{vs}$, that incorporates context programs $R$ that
\emph{respect the hidden atoms} of $P$ and $Q$
for comparisons \citep{BJN20:acmtocl}. The correctness of normalization
has been addressed in this sense. E.g., for a weight rule $r$ in
\eqref{eq:weight-rule}, $\set{r}\lpeq{vs}\tr{NORM}{}{\set{r}}$, which
meas that $r$ can be safely substituted by $\tr{NORM}{}{\set{r}}$ in
contexts respecting the hidden atoms introduced by the normalization.

\section{Level Rankings and Ranking Constraints}
\label{section:normal-ranking-constraints}

When a stable model $M\subseteq\sig{P}$ of a \emph{non-disjunctive}
logic program $P$ is constructed using the reduct $\GLred{P}{M}$ and
the $\imt{\GLred{P}{M}}$ operator, atoms true in the model $M$ get
divided into \emph{levels}
$M_i=(\iter{\imt{\GLred{P}{M}}}{i}{I})
      \setminus(\iter{\imt{\GLred{P}{M}}}{i-1}{I})$
where $i>0$ and $I\subseteq\sig{P}\setminus\heads{P}$ is a set of
input atoms. By default $I=\emptyset$ and $M_0=\emptyset$,
but if $I\neq\emptyset$, then $M_0=I$.
For finite programs $P$, the index $i$ is bounded from above by
$|\sig{P}|$. Based on this division of atoms, it is possible to read
off a \emph{level ranking}
$\fdef{\level{}}{\sig{P}}{\nat\union\set{\infty}}$ for the atoms of
the program \citep{Niemela08:amai}:
the rank $\level{a}=i$, if $a\in M_i$,
and $\level{a}=\infty$, if $a\not\in M$.
A \emph{level numbering} $\level{}$ \citep{Janhunen06:jancl} extends
any level ranking for the supporting rules $r\in\supp{P}{M}$ by the
equality%
\footnote{This holds for rules whose body is essentially normal
\eqref{eq:normal-rule} while generalizations for more complex bodies follow.}
$\level{r}=\max\sel{\level{b}}{b\in\pbody{r}}+1$.
Intuitively, the level $\level{r}$ of a rule $r$ indicates when $r$
can be applied to derive its head and, consequently,
$\level{a}=\min\sel{\level{r}}{r\in\defof{a}{P}\isect\supp{P}{M}}$.
By these interconnections, we may use level rankings and numberings
interchangeably in the sequel. If $r\not\in\supp{P}{M}$, then
$\level{r}=\infty$. The value $\infty$ emphasizes that an atom is
never derived or a rule becomes never applicable. The other option is
to restrict the domain of $\level{}$ to $M\union\supp{P}{M}$ for which
finite values exist, but some big value greater than any level rank is
useful in practice. E.g., given an SCC $S$ of the program $P$, the
level ranks $\level{a}$ of atoms $a\in S$ can be effectively
constrained by $0<\level{a}<|S|+1$; cf.~\eqref{eq:levels} below.

Many existing translations of logic programs into SAT, SMT, and MIP
rely on program completion \citep{Clark78:ldb}. The idea is to
translate a (normal) logic program $P$ into classical equivalences
that capture the \emph{supported models} of the program.
The purpose of level ranking constraints \citep{Niemela08:amai},
however, is to distinguish the stable ones among them by incorporating
a requirement that there is a level ranking $\level{}$ for a model
$M\in\suppm{P}$. These constraints can be expressed, e.g., as formulas
in \emph{difference logic} (DL). This SMT-style logic
\citep{NOT06:jacm} enriches propositional formulas with difference
constraints of the form $x-y\leq k$ where $x,y$ are real/integer
variables and $k$ is a constant.
The evaluation of a difference atom $x-y\leq k$ is based on an
assignment $\fdef{\tau}{\vars}{\integers}$ on the set of variables
$\vars$ in use. Given $\tau$, the constraint $x-y\leq k$ is satisfied
by $\tau$, denoted by $\tau\models x-y\leq k$, iff
$\tau(x)-\tau(y)\leq k$. A \emph{DL-interpretation} is a pair
$\pair{I}{\tau}$ where $I$ a standard propositional interpretation and
$\tau$ an assignment. A formula $\phi$ of DL is satisfied by
$\pair{I}{\tau}$, denoted $\pair{I}{\tau}\models\phi$, if $\phi$
evaluates to true under $I$ by the usual propositional rules extended
by the evaluation of difference atoms subject to $\tau$.

A difference constraint $x_b-x_a\leq -1$ (i.e., $x_a>x_b$) can express
that $a$ is derived \emph{after} $b$ under the assumption that $x_a$
and $x_b$ store the level ranks of $a$ and $b$, respectively. Based on
this idea, we introduce formulas for the representation of level
ranks. Their \emph{scope} is specified in terms of a set of atoms
$S\subseteq\sig{P}$ to be discussed in further detail below.
\begin{eqnarray}
(1\leq x_a\leq |S|+1), & ~~ & \neg a\limpl(x_a\geq|S|+1), \label{eq:levels} \\
\dep{a}{b}   & \lequiv & b\land(x_a>x_b), \label{eq:dep} \\
\gap{a}{b}   & \lequiv & b\land(x_a>x_b+1). \label{eq:depdep}
\end{eqnarray}
By the two formulas in \eqref{eq:levels}, level ranks are positive and
fixed to $|S|+1$ if an atom $a$ is false. In addition, we introduce
two kinds of new atoms to help with the formulation of the actual
level ranking constraints.
First, the atom $\dep{a}{b}$ defined by \eqref{eq:dep}, denotes an
\emph{active} dependency of a head atom $a$ on a positive body atom
$b$, i.e., $b$ must be true. Such dependencies are deployed by
\tconly{Bomanson et al.}~\cite{BGJKS16:fi},
but we use a definition in terms of the difference constraint.
Second, the atom $\gap{a}{b}$, as defined by \eqref{eq:depdep}, means
a similar relationship except that $b$ is derived so early that it is
not critical for determining the exact level rank of $a$. Note that
$\gap{a}{b}$ implies $\dep{a}{b}$ in general but not vice versa. In
particular, if $\dep{a}{b}$ is true and $\gap{a}{b}$ is false, then
$b$ must be true and $x_a=x_b+1$, indicating that $a$ is derived right
after $b$. Such body atoms $b$ from the preceding level are relevant
when $a$ is derived by some rule $r\in\defof{a}{P}$ at level $x_a$.

In the following, we present a reformulation of level ranking
constraints \citep{Niemela08:amai} by exploiting the dependency
relations from \eqref{eq:dep} and \eqref{eq:depdep}.
Our further goal is to incorporate the idea of
\emph{ordered completion} \citep{ACZZ15:aij} for the sake
of more compact representation.
Given an atom $a\in\sig{P}$, its completion is based on the set
$\defof{a}{P}=\eset{r_1}{r_k}$ of its defining rules. In the sequel,
the \emph{applicability} of a rule $r_i$ is denoted by a new atom
$\app{r_i}{}$.
\begin{eqnarray}
a & \lequiv & \rg{\app{r_1}{}}{\lor}{\app{r_k}{}},
\label{eq:def-of-atom}
\\
\app{r_i}{} & \lequiv &
\Land_{b\in\pbody{r_i}\isect S}\dep{a}{b}
\land(\pbody{r_i}\setminus S)\land\neg\nbody{r_i}
~~~~~~ (1\leq i\leq k),
\label{eq:def-of-body}
\\
\app{r_i}{} & \limpl & \Lor_{b\in\pbody{r_i}\isect S}\neg\gap{a}{b}
~~~~~~ (1\leq i\leq k, ~~ \pbody{r_i}\isect S\neq\emptyset),
\label{eq:deny-gaps}
\\
\app{r_i}{} & \limpl & (x_a\leq 1)
~~~~~~ (1\leq i\leq k, ~~ \pbody{r_i}\isect S=\emptyset).
\label{eq:reset-ext}
\end{eqnarray}

Intuitively, the equivalence \eqref{eq:def-of-atom} sets the head atom $a$
true if and only if at least one of its defining rules $r_i$ is applicable.
This, in turn, is defined by the equivalence \eqref{eq:def-of-body}
insisting that atoms in $\pbody{r_i}\isect S$ have been previously
derived and all remaining positive and negative body conditions are
satisfied. This formulation embeds both
\emph{weak} level ranking constraints \citep{Niemela08:amai}
and ordered completion \citep{ACZZ15:aij} but relative to the set $S$.
The constraint \eqref{eq:deny-gaps} is the counterpart of
\emph{strong} level ranking constraints \citep{Niemela08:amai}
enforcing the minimality of level ranks assigned to atoms. Besides
this, the formula \eqref{eq:reset-ext} resets the level of the head
atom $a$ to $1$ when $a$ can be derived by applying an
\emph{externally supporting} rule $r_i$
with $\pbody{r_i}\isect S=\emptyset$.

Regarding the scope $S$, it is natural to think that the head atom $a$
is included usually. Also, few special cases deserve further attention.
(i)
If $S=\sig{P}$, then the completion becomes fully ordered, i.e.,
$\pbody{r_i}\setminus S$ becomes empty in \eqref{eq:def-of-body} and
the formula \eqref{eq:reset-ext} is generated only for
$r_i\in\defof{a}{P}$ with an empty $\pbody{r_i}$.
Moreover, if all atoms of $\sig{P}$ are completed using
\eqref{eq:def-of-atom}--\eqref{eq:reset-ext}, the resulting formulas
capture stable models directly, including level ranks of atoms.
(ii)
If $S=\sccof{a}$, then the ordering becomes local to the component
$S$. Then, if all atoms of $S$ are completed, the formulas capture
stable models $M$ for the \emph{program module} $P_S$ induced by the
component $S$ \citep{OJ08:tplp}. It should be emphasized that the
input atoms in $\sig{P_S}\setminus S$ are not subject to completion
and they may vary freely. Therefore, given a set of facts
$I\subseteq\sig{P_S}\setminus S$ as an actual \emph{input} for $P_S$,
the stable models of $P_S$ become solutions to
$M=\lm{\GLred{P_S}{M}\union I}$ whose levels $i$ are determined by
$\iter{\imt{\GLred{P_S}{M}}}{i}{I}$.
(iii)
Finally, if $S=\emptyset$ and $a\not\in S$ as an exception, equations
\eqref{eq:def-of-atom} and \eqref{eq:def-of-body} capture the standard
completion of $a$, the formula \eqref{eq:deny-gaps} becomes void, and
the formula \eqref{eq:reset-ext} ensures that $x_a=1$ whenever $a$ is true.

\begin{example}
As a minimal example, consider $a\IF a$ as the only rule $r_1$ of a
program $P$ and the SCC $S=\set{a}=\sccof{a}$.
We obtain the following formulas:
$(1\leq x_a\leq 2)$,
$\neg a\limpl(x_a\geq 2)$,
$\dep{a}{a}\limpl a\land(x_a>x_a)$,
$\gap{a}{a}\limpl a\land(x_a>x_a+1)$,
$a\lequiv\app{r_1}{}$,
$\app{r_1}{}\lequiv\dep{a}{a}$,
$\app{r_1}{}\limpl\neg\gap{a}{a}$.
They can be satisfied by falsifying $a$ and all new atoms, as well as
by setting $x_a=2$, indicating that $M_1=\emptyset$ is stable.
On the other hand, $M_2=\set{a}$ is not stable, which can be realized
by an attempt to make $a$ true in the formulas listed above. Thus
$\app{r_1}{}$ and $\dep{a}{a}$ must be true, too, and $\gap{a}{a}$
false. By further inspection of the formulas, it follows that
$x_a>x_a$ is true and $x_a>x_a+1$ is false, both indicating a
contradiction.
\eofex
\end{example}

The case $S=\sccof{a}$ is the most general one and deserves
justifications for correctness due to reformulations done in view
of \citep{Niemela08:amai} and the limitations of ordered completion
\citep{ACZZ15:aij} with regard to \eqref{eq:deny-gaps}.


\begin{definition}\label{def:TOC}
Given a normal logic program $P$ and a scope $S\subseteq\sig{P}$
of completion, the \emph{tight ordered completion} (TOC) of $P$
\emph{relative} to $S$ is the set of formulas 
\eqref{eq:levels} for $a\in S$,
\eqref{eq:dep} and \eqref{eq:depdep} for $a,b\in S$ whenever $a\pdep{P}b$,
and
\eqref{eq:def-of-atom}--\eqref{eq:reset-ext}
for each $a\in\sig{P}$ and $r_i\in\defof{a}{P}$.
\end{definition}

The TOC of $P$ relative to $S$ is denoted by $\tr{TOC}{S}{P}$ and we
omit $S$ from the notation $\tr{TOC}{S}{P}$, if $S=\sig{P}$. It is
worth noting that the length $\len{\tr{TOC}{S}{P}}$ stays linear in
$\len{P_s}$.

\begin{theorem}\label{theorem:normal-correctness}
Let $P$ be a normal logic program, $S$ an SCC of $P$,
and $P_S$ the module of $P$ induced by $S$.
\begin{enumerate}
\item
If $M\subseteq\sig{P_S}$ is a stable model of $P_S$ for an input
$I\subseteq\sig{P_S}\setminus S$ and $\fdef{\level{}}{M\isect S}{\nat}$
the respective level ranking, then there is a model $\pair{N}{\tau}$
for $\tr{TOC}{S}{P}$ such that
$M=N\isect\sig{P_S}$,
$\tau(x_a)=\level{a}$ for each $a\in M\isect S$,
and $\tau(x_a)=|S|+1$ for each $a\in S\setminus M$.

\item
If $\pair{N}{\tau}$ is a model of $\tr{TOC}{S}{P}$, then
$M=N\isect\sig{P_S}$ is a stable model of $P_S$ for the input
$I=N\isect(\sig{P_S}\setminus S)$ and for each $a\in M\isect S$,
$\level{a}=\tau(x_a)-\tau(z)$ is the respective level rank.
\end{enumerate}
\end{theorem}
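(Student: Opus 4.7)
The plan is to establish the two directions by construction and projection respectively, in both cases reducing to the established characterization of stable models of non-disjunctive programs as supported models that admit a level ranking (\cite{Niemela08:amai,Janhunen06:jancl}). The canonical ranking is the one induced by the iteration of $\imt{\GLred{P_S}{M}}$ from $I$, and it is this canonical ranking that will bridge $M$ and the DL-assignment $\tau$.

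For direction (1), I would start from $M\in\sm{P_S}$ with input $I$ and its canonical ranking $\fdef{\level{}}{M\isect S}{\nat}$ determined by $\level{a}=i$ when $a\in(\iter{\imt{\GLred{P_S}{M}}}{i}{I})\setminus(\iter{\imt{\GLred{P_S}{M}}}{i-1}{I})$. Define $\tau(x_a)=\level{a}$ for $a\in M\isect S$ and $\tau(x_a)=|S|+1$ for $a\in S\setminus M$; extend $N$ over $\sig{P}$ so that $N\isect\sig{P_S}=M$, and assign truth values to $\dep{a}{b}$, $\gap{a}{b}$, and $\app{r_i}{}$ as mandated by the right-hand sides of \eqref{eq:dep}, \eqref{eq:depdep}, and \eqref{eq:def-of-body}. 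Then \eqref{eq:levels}--\eqref{eq:def-of-body} hold by construction, while \eqref{eq:def-of-atom} follows from supportedness of $M$ together with the observation that taking $r$ to be the minimizer of $\level{r}$ in $\defof{a}{P_S}\isect\supp{P_S}{M}$ makes $\app{r}{}$ true. The formula \eqref{eq:reset-ext} is immediate because an externally supporting rule $r_i$ with $\app{r_i}{}$ true forces $\level{r_i}=1$ and hence $\tau(x_a)\leq 1$.

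For direction (2), set $M=N\isect\sig{P_S}$ and $I=N\isect(\sig{P_S}\setminus S)$. The equivalences \eqref{eq:def-of-atom} and \eqref{eq:def-of-body} immediately make $M$ a supported model of the module $P_S$ with input $I$, since every $a\in M\isect S$ has some rule $r_i\in\defof{a}{P}$ whose body is satisfied by $M$. Defining $\level{a}=\tau(x_a)$ on $M\isect S$, the formulas \eqref{eq:levels}, \eqref{eq:deny-gaps}, and \eqref{eq:reset-ext} certify that every supporting rule $r_i$ used in \eqref{eq:def-of-atom} has either $\pbody{r_i}\isect S=\emptyset$ with $\tau(x_a)=1$ or some $b\in\pbody{r_i}\isect S$ with $\tau(x_b)=\tau(x_a)-1$. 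A straightforward induction on $\tau(x_a)$ then shows $a\in\iter{\imt{\GLred{P_S}{M}}}{\tau(x_a)}{I}$, from which $M=\lm{\GLred{P_S}{M}\union I}$ follows and the stability of $M$ for input $I$ is concluded via the cited characterization.

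The hard part is the tightness argument underlying \eqref{eq:deny-gaps} in direction (1): we must verify it for \emph{every} rule $r_i$ admitted as applicable by the dependency-based formulation, not merely for the minimizer of $\level{r}$. Whenever $\app{r_i}{}$ is true, $r_i\in\supp{P_S}{M}$ with $\level{a}>\level{b}$ for every $b\in\pbody{r_i}\isect S$; combining the minimality $\level{a}\leq\level{r_i}=\max\sel{\level{b}}{b\in\pbody{r_i}\isect S}+1$ with these strict inequalities squeezes $\level{r_i}=\level{a}$, so some $b$ must achieve $\level{b}=\level{a}-1$ and hence $\gap{a}{b}$ is false. The symmetric subtlety in direction (2) is that the ranking read off from $\tau$ must align with the iterative levels of $\imt{\GLred{P_S}{M}}$ rather than being an arbitrary strict order on $M\isect S$; this alignment is precisely what \eqref{eq:deny-gaps} and \eqref{eq:reset-ext} enforce, making the induction in the previous paragraph go through.
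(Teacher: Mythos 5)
The paper states Theorem~\ref{theorem:normal-correctness} without a proof, so there is nothing to compare against directly; judging your argument on its own merits, direction~(1) is essentially right. In particular, your verification of \eqref{eq:deny-gaps} via the squeeze $\level{a}\leq\level{r_i}\leq\level{a}$ is exactly the needed observation, and the treatment of \eqref{eq:def-of-atom}, \eqref{eq:def-of-body}, and \eqref{eq:reset-ext} is sound.

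Direction~(2), however, has a genuine gap: the claim that \eqref{eq:def-of-atom} and \eqref{eq:def-of-body} ``immediately make $M$ a supported model of $P_S$'' is false. Being a supported model requires $M=\imt{\GLred{P_S}{M}}(M)$, and you only justify the inclusion $M\subseteq\imt{\GLred{P_S}{M}}(M)$ (every true atom has a rule with satisfied body). The other inclusion --- that $M$ is closed under the rules of $P_S$, equivalently $\lm{\GLred{P_S}{M}\union I}\subseteq M$ --- does \emph{not} follow from \eqref{eq:def-of-atom}--\eqref{eq:def-of-body}, because the completion here is \emph{ordered}: the ``if'' direction of \eqref{eq:def-of-body} is weakened by the conjuncts $\dep{a}{b}$, so a rule whose body is satisfied by $M$ may still have $\app{r_i}{}$ false when some true $b\in\pbody{r_i}\isect S$ carries $\tau(x_b)\geq\tau(x_a)$. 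Concretely, for the module $P_S=\set{a\IF b\END\; b\IF c\END\; c\IF a\END\; c\IF\END}$ with $S=\set{a,b,c}$, the assignment $N=\set{b,c}$, $\tau(x_c)=1$, $\tau(x_b)=\tau(x_a)=4$ satisfies \eqref{eq:levels}--\eqref{eq:def-of-body} and \eqref{eq:reset-ext}, yet $N$ is not even a classical model of $P_S$; it is excluded only by \eqref{eq:deny-gaps}. The missing step is to show that in any model of $\tr{TOC}{S}{P}$ no \emph{true} atom of $S$ can receive the rank $|S|+1$: if some $a_0\in N\isect S$ had $\tau(x_{a_0})=|S|+1$, then \eqref{eq:def-of-atom}, \eqref{eq:def-of-body}, \eqref{eq:deny-gaps} and \eqref{eq:reset-ext} yield a chain $a_0,a_1,\dots$ of true atoms in $S$ with $\tau(x_{a_{k+1}})=\tau(x_{a_k})-1$, producing $|S|+1$ distinct atoms in $S$ --- a contradiction with the bounds in \eqref{eq:levels}. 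Only with this in hand does a satisfied rule body force all $\dep{a}{b}$ true (since $\tau(x_a)\geq|S|+1>\tau(x_b)$ when $a$ is false), hence $\app{r_i}{}$ and $a$ true, closing $M$ under $P_S$ and making your ``$M=\lm{\GLred{P_S}{M}\union I}$ follows'' legitimate. Relatedly, your final remark correctly locates where \eqref{eq:deny-gaps} is needed for the \emph{exactness} of $\level{a}=\tau(x_a)-\tau(z)$, but that too deserves the explicit induction on canonical levels rather than the one-line gesture.
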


\noindent
As a preparatory step toward generalizations for aggregated rules, our
final example in this section illustrates $\trop{TOC}{S}$ in the
context of a cardinality rule \eqref{eq:cardinality-rule} that is
normalized before completion.

\begin{example}\label{ex:card-TOC}
Let us assume that an atom $a$ is defined by a single cardinality rule
$a\IF 1\leq\set{\rg{b_1}{\AND}{b_n}}$
as part of a larger program $P$ having an SCC $S=\sccof{a}$ such
that $\eset{b_1}{b_n}\subseteq S$.
The rule is compactly expressible even without auxiliary atoms
in terms of $n$ positive normal rules 
\begin{center}
$\rg{a\IF b_1\END}{~}{a\IF b_n\END}$
\end{center}
The tight ordered completion produces the following formulas for
the joint head atom $a\in S$:
\begin{eqnarray}
a\lequiv\rg{\app{a}{1}}{\lor}{\app{a}{n}},
\label{eq:ex-def-of-atom}
\\
\rg{\app{a}{1}\lequiv\dep{a}{b_1}~}{,}{~\app{a}{n}\lequiv\dep{a}{b_n}},
\label{eq:ex-def-of-body}
\\
\rg{\app{a}{1}\limpl\neg\gap{a}{b_1}~}{,}{~\app{a}{n}\limpl\neg\gap{a}{b_n}}.
\label{eq:ex-deny-gaps}
\end{eqnarray}
In the above, we adopt a convention that $\app{a}{i}$ denotes the
application of $r_i\in\defof{a}{P}$.  Since each $b_i\in S$ the
respective rules $a\IF b_i$ may not contribute to external support via
\eqref{eq:reset-ext}.
\eofex
\end{example}

\section{The  Case of Monotone Aggregates}
\label{section:monotone-aggregates}

Cardinality rules \eqref{eq:cardinality-rule} and weight rules
\eqref{eq:weight-rule} with \emph{lower bounds} are widely used
examples of monotone aggregates and, in particular, if the
(anti-monotone) effect of negative literals is disregarded in the
sense of stable models (cf. Definition \ref{def:reduct}). The level
number $\level{a}$ of an atom $a\in\sig{P}$ is generalized in a
straightforward way when \emph{positive} cardinality/weight rules are
incorporated to the definition of the $\imt{P}$ operator
\citep{SNS02:aij}. As before, $\level{a}$ is the least value
$i\in\nat$ such that $a\in\iter{\imt{P}}{i}{\emptyset}$ for positive
programs $P$. Default negation is analogously treated via the reduct,
i.e., given a stable model $M\subseteq\sig{P}$, the operator
$\imt{\GLred{P}{M}}$ can be used to assign level ranks for
$a\in\sig{P}$.
The goal of this section is to generalize tight ordered completion for
rules involving monotone aggregates. The resulting formulas can be
used to enforce stability in various settings where the semantics is
no longer based on stable models themselves.

The normalization \citep{BJN20:acmtocl} of cardinality rules is
used to guide our intuitions about the intended generalization of
tight ordered completion. Besides this, to enable compact representations
of aggregates as propositional formulas, we extend the language
of difference logic by pseudo-Boolean constraints of the form
$\rg{c_1a_1}{+}{c_ma_m}\geq b$
where $\rg{a_1}{,}{a_m}$ are atoms, $\rg{c_1}{,}{c_m}$ their respective
integer coefficients, and $b$ an integer bound. Obviously, given an
interpretation $\pair{I}{\tau}$ in DL, we define
$\pair{I}{\tau}\models\rg{c_1a_1}{+}{c_ma_m}\geq b$,
iff $\sum_{I\models a_i}c_i\geq b$, since the truth values of
$\rg{a_1}{,}{a_m}$ are determined by $I$ independently of $\tau$.
Let us begin with an example that concentrates on a corner case ($l=1$
and $m=0$) of \eqref{eq:cardinality-rule} from Example
\ref{ex:positive-card}.

\begin{example}\label{ex:extract-TOC}
Recalling formulas
\eqref{eq:ex-def-of-atom}--\eqref{eq:ex-deny-gaps} from Example
\ref{ex:card-TOC}, we pull them back to the setting of the
original cardinality rule
$a\IF 1\leq\set{\rg{b_1}{,}{b_n}}$
where $\rg{b_1}{,}{b_n}$ depend recursively on the head $a$.
Based on a \emph{connecting} formula
$\rg{\app{a}{1}}{\lor}{\app{a}{n}}\lequiv\app{a}{}$
on the applicability of the $n$ rules in the normalization versus the
applicability of the original rule, we rewrite
\eqref{eq:ex-def-of-atom}--\eqref{eq:ex-deny-gaps} as follows:
\begin{eqnarray}
a\lequiv\app{a}{},
\label{eq:card-def-of-atom}
\\
\app{a}{}\lequiv(\rg{\dep{a}{b_1}}{+}{\dep{a}{b_n}}\geq 1),
\label{eq:card-def-of-body}
\\
\app{a}{}\limpl(\rg{\gap{a}{b_1}}{+}{\gap{a}{b_n}}<1),
\label{eq:card-deny-gaps}
\end{eqnarray}
where the new atoms $\rg{\dep{a}{b_1}}{,}{\dep{a}{b_n}}$ and
$\rg{\gap{a}{b_1}}{,}{\gap{a}{b_n}}$ are still to be interpreted
subject to formulas \eqref{eq:levels}--\eqref{eq:depdep} as
in the context of Example \ref{ex:card-TOC}.
\eofex
\end{example}

Note that the formula \eqref{eq:card-deny-gaps} expresses a
\emph{dynamic} check, i.e., it works for any subset $B$ of
$\eset{b_1}{b_n}$ of atoms \emph{true} and \emph{derived earlier} than
$a$. If the cardinality rule is applied (i.e., $|B|\geq 1$),
$\gap{a}{b_i}$ must be false for each $b_i\in B$, amounting to the
effect of the individual implications in \eqref{eq:ex-deny-gaps}.
The formulas in Examples \ref{ex:card-TOC} and \ref{ex:extract-TOC}
are based on different auxiliary atoms denoting the applicability of
rules. The connecting formula
$\rg{\app{a}{1}}{\lor}{\app{a}{n}}\lequiv\app{a}{}$
describes their intended semantic interconnection for propositional
interpretations $M$ and $N$, i.e.,
$M\models\rg{\app{a}{1}}{\lor}{\app{a}{n}}$ iff $N\models\app{a}{}$.
Because \eqref{eq:dep} and \eqref{eq:depdep} disconnect all integer
variables from the formulas under consideration, the following
proposition concentrates on the propositional parts of DL-interpretations
that are intended to satisfy TOC formulas in the end.

\begin{proposition}\label{prop:equisatisfiable}
The formulas \eqref{eq:ex-def-of-atom}--\eqref{eq:ex-deny-gaps} and
\eqref{eq:card-def-of-atom}--\eqref{eq:card-deny-gaps} constrain the
respective interpretations $\pair{M}{\tau}$ and $\pair{N}{\tau}$
equivalently, as conveyed by the satisfaction of the formula
$\rg{\app{a}{1}}{\lor}{\app{a}{n}}\lequiv\app{a}{}$.
\end{proposition}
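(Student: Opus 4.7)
The plan is to establish the claimed equivalence by exhibiting a canonical bijection between truth assignments to the per-rule atoms $\app{a}{1},\ldots,\app{a}{n}$ in the normalized formulation and a truth assignment to the single atom $\app{a}{}$ in the aggregated formulation, constrained precisely by the connecting formula $\app{a}{1}\lor\cdots\lor\app{a}{n}\lequiv\app{a}{}$. The decisive semantic fact, inherited from \eqref{eq:dep} and \eqref{eq:depdep}, is that $\gap{a}{b_i}\limpl\dep{a}{b_i}$ holds in every DL-interpretation, because $b_i\land(x_a>x_{b_i}+1)$ is strictly stronger than $b_i\land(x_a>x_{b_i})$. This implication is what lets the pointwise blocks \eqref{eq:ex-deny-gaps} line up with the aggregated guard \eqref{eq:card-deny-gaps} in both directions.

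For the forward direction, I assume $\pair{M}{\tau}$ satisfies \eqref{eq:ex-def-of-atom}--\eqref{eq:ex-deny-gaps} together with the connecting formula. Then \eqref{eq:card-def-of-atom} is immediate by chaining \eqref{eq:ex-def-of-atom} with that formula. For \eqref{eq:card-def-of-body}, the equivalences of \eqref{eq:ex-def-of-body} rewrite $\app{a}{1}\lor\cdots\lor\app{a}{n}$ as $\dep{a}{b_1}\lor\cdots\lor\dep{a}{b_n}$, which with $\{0,1\}$-valued summands is semantically identical to the pseudo-Boolean inequality $\dep{a}{b_1}+\cdots+\dep{a}{b_n}\geq 1$. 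For \eqref{eq:card-deny-gaps}, I argue by contradiction: if $\app{a}{}$ holds while some $\gap{a}{b_i}$ does too, then by the key fact $\dep{a}{b_i}$ holds, so \eqref{eq:ex-def-of-body} makes $\app{a}{i}$ hold, whereupon \eqref{eq:ex-deny-gaps} yields $\neg\gap{a}{b_i}$, a contradiction; hence the sum of gap atoms is $0<1$. The backward direction is dual: starting from $\pair{N}{\tau}$ satisfying \eqref{eq:card-def-of-atom}--\eqref{eq:card-deny-gaps} and the connecting formula, I extend $N$ by setting each $\app{a}{i}$ true exactly when $\dep{a}{b_i}$ is, so that \eqref{eq:ex-def-of-body} holds by construction. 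Then \eqref{eq:ex-def-of-atom} follows via $a\lequiv\app{a}{}\lequiv\app{a}{1}\lor\cdots\lor\app{a}{n}$, and \eqref{eq:ex-deny-gaps} follows because any true $\app{a}{i}$ propagates to $\app{a}{}$ through \eqref{eq:card-def-of-body}, after which \eqref{eq:card-deny-gaps} silences every $\gap{a}{b_j}$, in particular $\gap{a}{b_i}$.

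The main obstacle, modest but genuine, is reconciling the apparent asymmetry between the pointwise guard \eqref{eq:ex-deny-gaps}, which fires per rule, and the aggregated guard \eqref{eq:card-deny-gaps}, which fires only when $\app{a}{}$ is true. Without the structural implication $\gap{a}{b_i}\limpl\dep{a}{b_i}$, the two formulations could diverge on branches where $a$ is false yet some spurious $\gap$ atom is set; the proof therefore hinges on noticing that \eqref{eq:dep} and \eqref{eq:depdep} preclude precisely such branches, so that the equisatisfiability established above is in fact tight with respect to every DL-assignment $\tau$.
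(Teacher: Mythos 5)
Your proof is correct and follows essentially the same route as the paper's: chain the head equivalences through the connecting formula, collapse \eqref{eq:ex-def-of-body} into the pseudo-Boolean form of \eqref{eq:card-def-of-body}, and match the gap guards --- where you usefully make explicit the step the paper only asserts, namely that $\gap{a}{b_i}\limpl\dep{a}{b_i}$ (from \eqref{eq:dep}--\eqref{eq:depdep}) combined with \eqref{eq:ex-def-of-body} is what lifts the per-rule implications \eqref{eq:ex-deny-gaps} to the aggregated guard \eqref{eq:card-deny-gaps}. One small slip in your closing remark: divergence could not arise ``where $a$ is false'' (there both guards are vacuous, since all $\app{a}{i}$ and $\app{a}{}$ are then forced false); the case the implication actually rules out is $a$ being true via some rule $r_i$ while $\gap{a}{b_j}$ is set for a \emph{different} index $j$.
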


\begin{proof}
Assuming the connecting formula makes formulas
\eqref{eq:ex-def-of-atom} and \eqref{eq:card-def-of-atom}
equivalent.
Formulas \eqref{eq:ex-def-of-atom} and \eqref{eq:ex-def-of-body} imply
$a\lequiv\rg{\dep{a}{b_1}}{\lor}{\dep{a}{b_n}}$ that is equivalent to
\eqref{eq:card-def-of-body} under \eqref{eq:card-def-of-atom}.
Finally, $\rg{\gap{a}{b_1}}{+}{\gap{a}{b_n}}<1$ is the same as
$\rg{\neg\gap{a}{b_1}}{\land}{\neg\gap{a}{b_n}}$.
These conditions are equally enforced by \eqref{eq:ex-deny-gaps} and
\eqref{eq:card-deny-gaps} when the connecting formula is satisfied.
\hfill
\end{proof}

Proposition \ref{prop:equisatisfiable} indicates that the formulas
\eqref{eq:ex-def-of-atom}--\eqref{eq:ex-deny-gaps} introduced for the
normalizing rules can be safely substituted by the formulas
\eqref{eq:card-def-of-atom}--\eqref{eq:card-deny-gaps} for the
original rule. In this way, the aggregated condition is restored
as a subformula in \eqref{eq:card-def-of-body} while its negation
incarnates in \eqref{eq:card-deny-gaps}.
Recall that the truth values of atoms $\gap{a}{b_i}$ are determined by
\eqref{eq:depdep}. If \eqref{eq:card-deny-gaps} were not satisfied by
$\pair{N}{\tau}$, at least one $\gap{a}{b_i}$ atom must be true, i.e.,
$N\models b_i$ and $\tau\models(x_a>x_{b_i}+1)$ assuming the
satisfaction of \eqref{eq:depdep}. Thus $b_i$ would be derived so
early that the derivation of $a$ is feasible before and the value of
$x_a$ could be decreased. Consequently, the joint effect of the
formulas \eqref{eq:card-def-of-body} and \eqref{eq:card-deny-gaps} is
that $x_a=\min\sel{x_{b_i}+1}{N\models b_i}$ holds which is in harmony
with the characterization of \citep{Janhunen06:jancl} when applied to
the normalizing rules $\rg{a\IF b_1\END}{}{a\IF b_n\END}$
Before addressing arbitrary cardinality rules, we draw the reader's
attention to the other extreme.

\begin{example}
When $l=n$ and $m=0$ in \eqref{eq:cardinality-rule},
the rule can be directly cast as a positive normal rule
$a\IF\rg{b_1}{\AND}{b_n}\END$
Still assuming that $\eset{b_1}{b_n}\subseteq\sccof{a}$, the TOC
formulas resulting from \eqref{eq:def-of-atom}--\eqref{eq:deny-gaps} are
$a\lequiv\app{a}{1}$,
$\app{a}{1}\lequiv\rg{\dep{a}{b_1}}{\land}{\dep{a}{b_n}}$, and
$\app{a}{1}\limpl\Lor_{1\leq i\leq n}\neg\gap{a}{b_i}$.
The corresponding aggregated formulas can be seen in the formulas
\eqref{eq:card-def-of-body} and \eqref{eq:card-deny-gaps}
if the bound $1$ is substituted by $n$.
The resulting \emph{strong} level ranking constraint ensures that at
least one body atom $b_i$ is derived \emph{just before} $a$ and
$x_a=\max\sel{x_{b_i}}{1\leq i\leq n}+1$.
\eofex
\end{example}

The preceding example reveals our plan when it comes to covering more
general lower bounds $1<l<n$ in \eqref{eq:cardinality-rule} still
pertaining to the positive case $m=0$ and
$\eset{b_1}{b_n}\subseteq\sccof{a}$. In the sequel, we write
$\subseteq_l$ to denote the $l$-subset relation restricted to subsets
of size $l$ \emph{exactly}. Due to monotonicity, the satisfaction of
the rule body $l\leq\eset{b_1}{b_n}$ essentially depends on the
$l$-subsets of $\eset{b_1}{b_n}$. Thus, the cardinality rule
\eqref{eq:cardinality-rule} with $m=0$ can be normalized by
introducing a positive rule $a\IF B$ for each
$B\subseteq_l\eset{b_1}{b_n}$. The number of such rules $\binom{n}{l}$
is at maximum when $l$ is $n$ halved.%
\footnote{Note that $\binom{n}{\lfloor n/2\rfloor}\geq 3^{\lfloor n/2\rfloor}$ from $n>4$ onward.}
In spite of exponential growth, the resulting normalization serves
the purpose of understanding the effect of $l$ on the required TOC
formulas.
To update equations
\eqref{eq:ex-def-of-atom}--\eqref{eq:ex-deny-gaps}
for this setting, we need a new atom $\app{a}{B}$ for every
$B\subseteq_l\eset{b_1}{b_n}$ to capture the individual
applicabilities of the respective positive rules $a\IF B$:
\begin{eqnarray}
a & \lequiv & \Lor_{B\subseteq_l\eset{b_1}{b_n}}\app{a}{B},
\label{eq:ex-def-of-atom2}
\\
\app{a}{B} & \lequiv & \Land_{b\in B}\dep{a}{b} ~~
(\text{for }B\subseteq_l\eset{b_1}{b_n}),
\label{eq:ex-def-of-body2}
\\
\app{a}{B} & \limpl & \Lor_{b\in B}\neg\gap{a}{b} ~~
(\text{for }B\subseteq_l\eset{b_1}{b_n}).
\label{eq:ex-deny-gaps2}
\end{eqnarray}
The connecting formula
$\Lor_{B\subseteq_k\eset{b_1}{b_n}}\app{a}{B}\lequiv\app{a}{}$
links the above back to the original rule
$a\IF l\leq\eset{b_1}{b_n}$
suggesting the revisions of
\eqref{eq:card-def-of-atom}--\eqref{eq:card-deny-gaps}
for any lower bound $1\leq l\leq n$:
\begin{eqnarray}
a & \lequiv & \app{a}{},
\label{eq:card-def-of-atom2}
\\
\app{a}{} & \lequiv & (\rg{\dep{a}{b_1}}{+}{\dep{a}{b_n}}\geq l),
\label{eq:card-def-of-body2}
\\
\app{a}{} & \limpl & (\rg{\gap{a}{b_1}}{+}{\gap{a}{b_n}}<l).
\label{eq:card-deny-gaps2}
\end{eqnarray}
Most importantly, the length of the formulas
\eqref{eq:card-def-of-atom2}--\eqref{eq:card-deny-gaps2}
stays linear in $n$ in contrast with their alternatives
\eqref{eq:ex-def-of-atom2}--\eqref{eq:ex-deny-gaps2}
based on $l$-subsets. The aggregate-based formulation covers all
$l$-subsets of $\eset{b_1}{b_n}$ and their supersets that also satisfy
the body of \eqref{eq:cardinality-rule} with $m=0$ by monotonicity.

\begin{proposition}\label{prop:equisatisfiable2}
The formulas \eqref{eq:ex-def-of-atom2}--\eqref{eq:ex-deny-gaps2} and
\eqref{eq:card-def-of-atom}--\eqref{eq:card-deny-gaps2} constrain the
respective interpretations $\pair{M}{\tau}$ and $\pair{N}{\tau}$
equivalently, as conveyed by the satisfaction of the formula
$\Lor_{B\subseteq_l\eset{b_1}{b_n}}\app{a}{B}\lequiv\app{a}{}$.
\end{proposition}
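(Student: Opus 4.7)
The plan is to lift the proof of Proposition~\ref{prop:equisatisfiable} from the bound $l=1$ to an arbitrary $1\leq l\leq n$, handling the three layers of \eqref{eq:ex-def-of-atom2}--\eqref{eq:ex-deny-gaps2} one by one and checking that each lines up with its aggregated counterpart in \eqref{eq:card-def-of-atom2}--\eqref{eq:card-deny-gaps2} under the connecting formula $\Lor_{B\subseteq_l\eset{b_1}{b_n}}\app{a}{B}\lequiv\app{a}{}$.

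First I would dispatch the head and body layers. Under the connecting formula, \eqref{eq:ex-def-of-atom2} and \eqref{eq:card-def-of-atom2} become the same equivalence for $a$ by substituting $\app{a}{}$ for $\Lor_{B\subseteq_l\eset{b_1}{b_n}}\app{a}{B}$. Combining \eqref{eq:ex-def-of-atom2} with \eqref{eq:ex-def-of-body2} yields $a\lequiv\Lor_{B\subseteq_l\eset{b_1}{b_n}}\Land_{b\in B}\dep{a}{b}$, and the propositional identity I need is that this disjunction of $l$-conjunctions encodes exactly ``at least $l$ of $\dep{a}{b_1},\ldots,\dep{a}{b_n}$ hold,'' hence is equivalent to the pseudo-Boolean constraint $(\rg{\dep{a}{b_1}}{+}{\dep{a}{b_n}}\geq l)$; together with \eqref{eq:card-def-of-atom2} this matches \eqref{eq:card-def-of-body2}.

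The main obstacle is the gap layer, where the family of implications \eqref{eq:ex-deny-gaps2} indexed by $l$-subsets $B$ must be related to the single pseudo-Boolean inequality \eqref{eq:card-deny-gaps2}. Here I would exploit that \eqref{eq:dep} and \eqref{eq:depdep} jointly entail $\gap{a}{b}\limpl\dep{a}{b}$, which allows one to cross between $\gap{}{}$ and $\dep{}{}$ atoms via the truth of $b$ and the ordering of level ranks. For the $(\Rightarrow)$ direction, assume \eqref{eq:ex-deny-gaps2} for every $B$, and suppose toward contradiction that $\app{a}{}$ holds while at least $l$ of the $\gap{a}{b_i}$ hold; choosing any $l$-subset $B'$ of such indices, each $b\in B'$ also satisfies $\dep{a}{b}$, so \eqref{eq:ex-def-of-body2} forces $\app{a}{B'}$ true, and \eqref{eq:ex-deny-gaps2} for $B'$ then requires $\neg\gap{a}{b}$ for some $b\in B'$, a contradiction. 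For the converse, assume \eqref{eq:card-deny-gaps2} and that $\app{a}{B}$ holds for some $l$-subset $B$; the connecting formula makes $\app{a}{}$ true, so strictly fewer than $l$ of the $\gap{a}{b_i}$ are true, and since $|B|=l$ a pigeonhole argument gives some $b\in B$ with $\neg\gap{a}{b}$. Combining the three layers yields the claimed equivalence of the constraint systems on $\pair{M}{\tau}$ and $\pair{N}{\tau}$.
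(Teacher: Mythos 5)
Your proposal is correct and follows essentially the route the paper sketches: the paper's own proof is only a remark reducing everything to the identity between $\Lor_{B\subseteq_l\eset{b_1}{b_n}}\Land_{b\in B}\dep{a}{b}$ and the pseudo-Boolean constraint $(\rg{\dep{a}{b_1}}{+}{\dep{a}{b_n}}\geq l)$, with the gap layer handled ``analogously.'' Your write-up fills in that analogy properly, in particular correctly identifying that the forward direction of the gap layer needs $\gap{a}{b}\limpl\dep{a}{b}$ (available from \eqref{eq:dep} and \eqref{eq:depdep}) to turn an $l$-subset of true $\gap{a}{\cdot}$ atoms into a witness $\app{a}{B'}$.
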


The proof is similar to that of Proposition \ref{prop:equisatisfiable}
and amounts to showing that the (big) disjunctive formula
$\Lor_{B\subseteq_l\eset{b_1}{b_n}}\Land_{b\in B}\dep{a}{b}$
is expressible as $(\rg{\dep{a}{b_1}}{+}{\dep{a}{b_n}}\geq l)$.
Similar aggregation is achieved in \eqref{eq:card-deny-gaps2} in terms
of $\rg{\gap{a}{b_1}}{,}{\gap{a}{b_1}}$. An important observation is
that $\pair{N}{\tau}\not\models$ \eqref{eq:card-deny-gaps2} if and
only if $N\models\app{a}{}$ and $\exists B\subseteq_l\eset{b_1}{b_n}$
such that for each $b\in B$, $N\models b$, $N\models\gap{a}{b}$, and
$\tau\models(x_a>x_b+1)$. Since $B$ reaches the bound $l$ the
value of $x_a$ could be decreased to $\max\sel{\tau(x_b)}{b\in B}+1$
or even further if $|B|>l$.
Thus the satisfaction of \eqref{eq:card-deny-gaps2} means that no
$B\subseteq_l(\eset{b_1}{b_n}\isect N)$ of true atoms could be used to
decrease the value of $x_a$. The net effect is that $x_a$ has the
critical minimum value. Since \eqref{eq:card-def-of-body2} is also
satisfied there are at least $l$ true atoms derived before $a$ but
sufficiently many of them are derived \emph{just before} $a$. For
those atoms $b$ we have $x_a=x_b+1$, $\dep{a}{b}$ true, but
$\gap{a}{b}$ false!

\begin{example}\label{ex:card-critical}
Consider the rule $a\IF 2\leq\set{b_1,b_2,b_3,b_4}$ in the context
of a model $\pair{N}{\tau}$ where $N=\set{a,b_1,b_3,b_4,\app{a}{}}$,
$\tau(x_{b_1})=\tau(x_{b_4})=2$, $\tau(x_{b_3})=1$, and $\tau(x_{b_2})=6$
by default as $N\not\models b_2$, see \eqref{eq:levels}.
Now the rule body is satisfied by three $2$-subsets
$B_1=\set{b_1,b_3}$, $B_2=\set{b_1,b_4}$, and $B_3=\set{b_3,b_4}$
justifying the level rank $\tau(x_a)=3$, since
$\max\sel{\tau(x_b)}{b\in B_i}=2$
for each $1\leq i\leq 3$. We have $\gap{b_i}{a}$ true only for $i=3$
and thus \eqref{eq:card-deny-gaps2} is respected for $l=2$. But, if
$\tau(x_a)=4$ had been alternatively set, the count of such $b_i$'s
would be $3$, falsifying \eqref{eq:card-deny-gaps2}.
\eofex
\end{example}

\subsection{Weights}
\label{section:weights}

So far, we have established relatively general form of TOC formulas
\eqref{eq:card-def-of-atom2}--\eqref{eq:card-deny-gaps2} that cover
cardinality rules \eqref{eq:cardinality-rule} when $m=0$. Before
addressing negative body conditions ($m>0$) and settings where SCCs
play a major role, we take the weights of literals into consideration
as already present in weight rules \eqref{eq:weight-rule} when $m=0$.
Consequently, we have to substitute $l$-subsets used in
\eqref{eq:ex-def-of-atom2}--\eqref{eq:ex-deny-gaps2}
by \emph{weighted} subsets
$B\subseteq_w\eset{b_1=\wght{b_1}}{b_n=\wght{b_n}}$.
Such a subset $B$ can be formally defined in terms of the condition
$\wsum{B}{\eset{b_1=\wght{b_1}}{b_n=\wght{b_n}}}\geq w$
from Section \ref{section:preliminaries}.
It is clear by monotonicity that if
$B\subseteq_w\eset{b_1=\wght{b_1}}{b_n=\wght{b_n}}$, then
$B'\subseteq_w\eset{b_1=\wght{b_1}}{b_n=\wght{b_n}}$ for every $B'$
with $B\subseteq B'\subseteq\eset{b_1}{b_n}$. A weighted set
$B\subseteq_w\eset{b_1=\wght{b_1}}{b_n=\wght{b_n}}$ is defined to be
$\subseteq$-minimal with respect to $w$, if for no $B'\subset B$,
$B'\subseteq_w\eset{b_1=\wght{b_1}}{b_n=\wght{b_n}}$.
We use $\subseteq_w^{\min}$ to indicate such $\subseteq$-minimal weighted
subsets of $\eset{b_1}{b_n}$.
Assuming orthogonal generalizations of
\eqref{eq:ex-def-of-atom2}--\eqref{eq:ex-deny-gaps2}
for a \emph{positive} weight rule \eqref{eq:weight-rule} and the
weighted subsets $B\subseteq_w^{\min}\eset{b_1=\wght{b_1}}{b_n=\wght{b_n}}$
of its body, we rather incorporate weights into the formulas
\eqref{eq:card-def-of-atom2}--\eqref{eq:card-deny-gaps2}
as follows:

\begin{eqnarray}
a & \lequiv & \app{a}{},
\label{eq:weight-def-of-atom}
\\
\app{a}{} & \lequiv &
(\rg{\wght{b_1}\times\dep{a}{b_1}}{+}{\wght{b_n}\times\dep{a}{b_n}}\geq w),
\label{eq:weigth-def-of-body}
\\
\app{a}{} & \limpl &
(\rg{\wght{b_1}\times\gap{a}{b_1}}{+}{\wght{b_n}\times\gap{a}{b_n}}<w).
\label{eq:weight-deny-gaps}
\end{eqnarray}

\begin{proposition}\label{prop:equisatisfiable3}
The formulas \eqref{eq:ex-def-of-atom2}--\eqref{eq:ex-deny-gaps2} revised for
weighted subsets $B$ subject to the bound $w$ and the formulas
\eqref{eq:weight-def-of-atom}--\eqref{eq:weight-deny-gaps} constrain the
respective interpretations $\pair{M}{\tau}$ and $\pair{N}{\tau}$
equivalently, as conveyed by the satisfaction of the equivalence
$\Lor_{B\subseteq_w^{\min}\eset{b_1=\wght{b_1}}{b_n=\wght{b_n}}}\app{a}{B}\lequiv\app{a}{}$.
\end{proposition}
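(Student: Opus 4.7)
The plan is to mirror the strategy used for Proposition \ref{prop:equisatisfiable2}, reducing the claim to a single combinatorial fact about weighted subsets and then following the same three-part template (head equivalence, body equivalence, gap implication) used there.

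First, I would isolate the key monotonicity lemma underlying the whole argument: for any $I\subseteq\eset{b_1}{b_n}$, we have $\wsum{I}{\eset{b_1=\wght{b_1}}{b_n=\wght{b_n}}}\geq w$ iff $I$ contains some $B\subseteq_w^{\min}\eset{b_1=\wght{b_1}}{b_n=\wght{b_n}}$. The ``if'' direction follows since weight sums are monotone in their set argument; the ``only if'' direction follows by iteratively removing elements from $I$ whose removal preserves the bound, a process that must terminate at a minimal witness. This is the only nontrivial ingredient and is precisely where the notation $\subseteq_w^{\min}$ earns its keep.

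Second, I would apply this lemma with $\{b_i \mid \pair{N}{\tau}\models\dep{a}{b_i}\}$ in the role of $I$. It then certifies that the big disjunction $\Lor_{B\subseteq_w^{\min}}\Land_{b\in B}\dep{a}{b}$ that appears implicitly on the right-hand side of the weighted analog of \eqref{eq:ex-def-of-body2} is equivalent, under $\pair{N}{\tau}$, to the pseudo-Boolean constraint $\rg{\wght{b_1}\times\dep{a}{b_1}}{+}{\wght{b_n}\times\dep{a}{b_n}}\geq w$ of \eqref{eq:weigth-def-of-body}. Combined with the connecting formula $\Lor_{B\subseteq_w^{\min}\eset{b_1=\wght{b_1}}{b_n=\wght{b_n}}}\app{a}{B}\lequiv\app{a}{}$, this also aligns the weighted analog of \eqref{eq:ex-def-of-atom2} with \eqref{eq:weight-def-of-atom}, exactly as in the first half of the proof of Proposition~\ref{prop:equisatisfiable}.

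Third, for the gap constraint I would reason contrapositively. The weighted analog of \eqref{eq:ex-deny-gaps2} is the conjunction, over $B\subseteq_w^{\min}$, of the implications $\app{a}{B}\limpl\Lor_{b\in B}\neg\gap{a}{b}$. Under the connecting formula, this conjunction fails on $\pair{N}{\tau}$ precisely when $N\models\app{a}{}$ and some $B\subseteq_w^{\min}$ has $\gap{a}{b}$ true for every $b\in B$. Applying the lemma once more, now with $\{b_i\mid\pair{N}{\tau}\models\gap{a}{b_i}\}$ as $I$, this failure is equivalent to $\app{a}{}$ being true together with $\rg{\wght{b_1}\times\gap{a}{b_1}}{+}{\wght{b_n}\times\gap{a}{b_n}}\geq w$, i.e., exactly the failure of \eqref{eq:weight-deny-gaps}. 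Hence both formulations rule out the same interpretations. The main obstacle is truly the minimal-weighted-subset lemma; the remainder is a mechanical adaptation of the cardinality-case pattern, with $l$-subsets systematically replaced by $\subseteq_w^{\min}$-subsets and indicator variables replaced by their weighted counterparts.
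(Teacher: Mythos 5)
Your proof is correct and follows essentially the same route as the paper: the paper's own argument simply ``points out'' the equivalence of $\Lor_{B\subseteq_w^{\min}}\Land_{b\in B}\dep{a}{b}$ with the weighted pseudo-Boolean constraint and notes that the $\gap{a}{\cdot}$ case is analogous but negated, which is exactly the content of your second and third steps. Your explicit minimal-witness lemma (weight sum $\geq w$ iff the set contains some $\subseteq_w^{\min}$-subset, with non-negativity of weights doing the work in both directions) is the detail the paper leaves implicit, and it is stated and justified correctly.
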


\begin{proof}
Due to high similarity with respect to Proposition \ref{prop:equisatisfiable2},
we just point out the equivalence of the formulas
$\Lor_{B\subseteq_w^{\min}\eset{b_1=\wght{b_1}}{b_n=\wght{b_n}}}
 \Land_{b\in B}\dep{a}{b}$
and
$(\rg{\wght{b_1}\times\dep{a}{b_1}}{+}{\wght{b_n}\times\dep{a}{b_n}}\geq w)$.
The equivalence involving $\rg{\gap{a}{b_1}}{,}{\gap{a}{b_n}}$ is analogous
but negated.
\hfill
\end{proof}

Again, $\pair{N}{\tau}\not\models$ \eqref{eq:weight-deny-gaps}
implies $N\models\app{a}{}$ and for some
$B\subseteq_w^{\min}\eset{b_1=\wght{b_1}}{b_n=\wght{b_n}}$,
$N\models B$ and for every $b\in B$, $\tau\models(x_a>x_b+1)$.
Then, the value of $x_a$ could be decreased to
$\max\sel{\tau(x_b)}{b\in B}+1$. Thus the formula \eqref{eq:weight-deny-gaps}
makes $\tau(x_a)$ minimal as before.

\begin{example}\label{ex:positive-weight}
Let us consider a positive weight rule
$a\IF 7\leq\set{b_1=7,b_2=5,b_3=3,b_4=2,b_5=1}$
in the context of a model $\pair{N}{\tau}$ where $N$ sets all
body atoms $\rg{b_1}{,}{b_5}$ true and $\tau$ the level numbers
\begin{center}
$\tau(x_a)=5$, ~~
$\tau(x_{b_1})=5$, ~~
$\tau(x_{b_2})=4$, ~~
$\tau(x_{b_3})=3$, ~~
$\tau(x_{b_4})=2$, ~~
$\tau(x_{b_5})=1$.
\end{center}
The $\subseteq$-minimal satisfiers of the body are
$B_1=\set{b_1}$, $B_2=\set{b_2,b_3}$, and $B_3=\set{b_2,b_4}$.
The only atom in $B_1$ has a weight that reaches the bound $7$ alone,
but it is derived too late to affect the derivation of $a$. Both $B_2$
and $B_3$ yield the same value $\max\sel{\tau(x_b)}{b\in B_i}=4$ and
hence justify the one higher value $5$ assigned to
$x_a$. Interestingly, there is also an atom $b_5$ that is derived
early, but whose weight is irrelevant for satisfying the rule body nor
deriving $a$ any earlier. In fact, this weighted atom could be safely
deleted from the rule (under strong equivalence).

As regards the satisfaction of \eqref{eq:weight-deny-gaps}, the relevant
body atoms are $b_3$, $b_4$, and $b_5$, for which the atom
$\gap{a}{b_i}$ is made true by \eqref{eq:depdep}. The
sum of the respective weights $3+2+1$ is less than $7$.

Also, note that the level numbers assigned by $\tau$ to
$\rg{b_1}{,}{b_5}$ can be easily arranged with positive rules,
e.g., by using the chain of rules:
$b_1\IF b_2\END$
$b_2\IF b_3\END$
$b_3\IF b_4\END$
$b_4\IF b_5\END$
$b_5\END$
Given the respective program $P$, the operator $\imt{P}$ should
be applied $5$ times to make $a$ true.
\eofex
\end{example}

\subsection{Negative Conditions}
\label{section:negative}

Negative body conditions form the missing pieces when it comes to
fully covering WCPs with level ranking constraints as embedded in
tight ordered completion. To this end, our strategy is based on
rewriting and ideas used in \citep{BJN20:acmtocl} where the
correctness of normalization is first shown for positive programs and
then generalized for programs with negation.
In a nutshell, negative literals in \eqref{eq:weight-rule} can be
replaced by new atoms $\rg{\compl{c_1}}{,}{\compl{c_m}}$ that
respectively denote that $\rg{c_1}{,}{c_m}$ cannot be derived.
These atoms are subsequently defined by (atomic) normal rules
$\rg{\compl{c_1}\IF\naf c_1\END}{~~}{\compl{c_m}\IF\naf c_m\END}$
The outcome is a set of rules that is visibly strongly equivalent with
the original weight rule \eqref{eq:weight-rule}. The completions of
$\rg{\compl{c_1}}{,}{\compl{c_m}}$
are $\rg{\compl{c_1}\lequiv\neg c_1}{,}{\compl{c_m}\lequiv\neg c_m}$,
enabling the substitution of
$\rg{\compl{c_1}}{,}{\compl{c_m}}$
by $\rg{\neg c_1}{,}{\neg c_m}$ in any formulas of interest. In this
way, $\rg{\compl{c_1}}{,}{\compl{c_m}}$ can be readily forgotten under
classical semantics.
The transformation described above leaves the SCCs of the program
intact, because positive dependencies are not affected. Thus, besides
taking care of negative body conditions, our next rewriting step
recalls the scope $S\subseteq\sig{P}$ from
\eqref{eq:def-of-body}--\eqref{eq:reset-ext}:
we present TOC formulas to cover WCPs split into modules based on
SCCs. We say that a WCP is \emph{pure} if it contains weight rules
\eqref{eq:weight-rule} only.

\begin{definition}\label{def:TOC-of-WCP}
Let $P$ be a pure WCP and $S\subseteq\sig{P}$ an SCC of $P$ used as
the scope of completion.
The \emph{tight ordered completion} of $P$ relative to $S$, denoted
$\tr{TOC}{S}{P}$, consists of the formulas listed below:

\begin{itemize}
\item
If $|S|>1$, then for each $a\in S$:
\begin{enumerate}
\item the formulas \eqref{eq:levels};
\item the formulas \eqref{eq:dep} and \eqref{eq:depdep} for each $b\in S$ such that $a\pdep{P}b$; plus
\item the following formulas based on the definition $\defof{a}{P}=\eset{r_1}{r_k}$in the program $P$:
\begin{eqnarray}
a & \lequiv & \rg{\app{a}{1}}{\lor}{\app{a}{k}},
\label{eq:WCP-def-of-atom}
\\
\app{a}{i} & \lequiv & \int{a}{i}\lor\ext{a}{i},
\label{eq:WCP-int-or-ext}
\\
\int{a}{i} & \lequiv &
\textstyle
(\sum_{b\in\pbody{r_i}\isect S}(\wght{b}\times\dep{a}{b})
 +\sum_{b\in\pbody{r_i}\setminus S}(\wght{b}\times b) \nonumber \\
           &         &
\textstyle
\phantom{(}-\sum_{c\in\nbody{r_i}}(\wght{c}\times c)
 \geq \wght{r_i}-w_i),
\label{eq:WCP-def-of-body}
\\
\int{a}{i} & \limpl &
\textstyle
(\sum_{b\in\pbody{r_i}\isect S}(\wght{b}\times\gap{a}{b})
 +\sum_{b\in\pbody{r_i}\setminus S}(\wght{b}\times b) \nonumber \\
           &        &
\textstyle
\phantom{(}-\sum_{c\in\nbody{r_i}}(\wght{c}\times c)
 < \wght{r_i}-w_i)\, \lor\, \ext{a}{i},
\label{eq:WCP-deny-gaps}
\\
\ext{a}{i} & \lequiv &
\textstyle
(\sum_{b\in\pbody{r_i}\setminus S}(\wght{b}\times b)-\sum_{c\in\nbody{r_i}}(\wght{c}\times c) \nonumber \\
           &        &
\phantom{(}\geq \wght{r_i}-w_i),
\label{eq:WCP-def-of-ext}
\\
\ext{a}{i} & \limpl & (x_a\leq 1),
\label{eq:WCP-reset-ext}
\end{eqnarray}
where $w_i=\sum_{c\in\nbody{r_i}}\wght{c}$ is the adjustment to the bound
$\wght{r_i}$ of $r_i$.
\end{enumerate}

\item
If $|S|=1$ and $S=\set{a}=\sccof{a}$, then
\eqref{eq:WCP-def-of-atom}--\eqref{eq:WCP-def-of-ext}
are replaced by the standard completion
\begin{equation}
\app{a}{i}\lequiv
(\sum_{b\in\pbody{r_i}}(\wght{b}\times b)-\sum_{c\in\nbody{r_i}}(\wght{c}\times c)
 \geq \wght{r_i}-w_i).
\label{eq:WCP-trivial-body}
\end{equation}
\end{itemize}
\end{definition}

In the equations of Definition \ref{def:TOC-of-WCP}, the treatment of
negative literals occurring in a defining rule $r_i$ is justified by
their contribution
$\sum_{c\in\pbody{r_i}}(\wght{c}\times(1-c))=
w_i-\sum_{c\in\pbody{r_i}}(\wght{c}\times c)$
toward the bound $\wght{r_i}$.
Weight rules can also create external support in more flexible ways,
i.e., if the bound $\wght{r_i}$ can be reached by satisfying any
positive body conditions outside the SCC $S$ in question or any
negative body conditions. Moreover, a single weight rule $r_i$ may
justify the head $a$ either internally or externally as formalized by
\eqref{eq:WCP-int-or-ext}, which is different from the case of normal
rules. Formulas \eqref{eq:WCP-def-of-body} and \eqref{eq:WCP-def-of-ext}
capture this distinction. Note that $\ext{a}{i}$ implies $\int{a}{i}$.
The constraint \eqref{eq:WCP-deny-gaps} generalizes \eqref{eq:deny-gaps}
while \eqref{eq:WCP-reset-ext} is the analog of \eqref{eq:reset-ext}.
The consequent of \eqref{eq:WCP-deny-gaps} is weakened by the condition
$\ext{a}{i}$, since the other disjunct is falsified if $r_i$ provides
external support, $x_a=1$ holds, and all atoms $\dep{a}{\cdot}$
and $\gap{a}{\cdot}$ associated with $a$ are falsified. The
reader may have noticed that the formulas concerning $\int{a}{i}$
and $\ext{a}{i}$ share a potentially large subexpression
$s_i=\sum_{b\in\pbody{r_i}\setminus S}(\wght{b}\times b)
    -\sum_{c\in\nbody{r_i}}(\wght{c}\times c)$.
Certain back-end formalisms, such as DL and MIP, enable the representation
of this expression only once using an integer variable.

The following theorem states the correctness of $\tr{TOC}{}{P}$
obtained as the union of $\tr{TOC}{S}{P}$ for the SCCs $S$ of $P$. The
claimed one-to-one correspondence, however, must take into account the
fact that a satisfying assignment $\pair{M}{\tau}$ in DL can be
replicated into infinitely many copies by substituting $\tau$ in
$\pair{M}{\tau}$ by a function $\tau'(x)=\tau(x)+k$ for any
$k\in\integers$. The remedy is to introduce a special variable $z$
which is assumed to hold $0$ as its value and the values of the other
variables are set relative to the value of $z$. The current difference
constraints mentioning only one variable must be rewritten using $z$.
For instance, $1\leq x_a\leq |S|+1$ in \eqref{eq:levels} is expressed
by the conjunction of $z-x_a\leq -1$ and $x_a-z\leq|S|+1$, and this
is how the variable $z$ gets introduced.

\begin{theorem}\label{theorem:WCP-correctness}
Let $P$ be a WCP with SCCs $\rg{S_1}{,}{S_q}$ and $\rg{P_{S_1}}{,}{P_{S_q}}$
the respective modules of $P$.
Then $P$ and the set of formulas $F=\Union_{j=1}^q\tr{TOC}{S_j}{P}$
are visibly equivalent (up to assigning $z=0$).
\end{theorem}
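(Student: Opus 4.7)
The plan is to reduce the global claim to a per-SCC correspondence via the module theorem of \cite{OJ08:tplp}. Since each $\tr{TOC}{S_j}{P}$ only mentions atoms of $\sig{P_{S_j}}$ together with local auxiliary atoms ($\app{a}{i}$, $\int{a}{i}$, $\ext{a}{i}$, $\dep{a}{b}$, $\gap{a}{b}$) and the single shared integer variable $z$, the formula set $F$ splits cleanly along SCCs just as $P$ decomposes into modules $P_{S_j}$. Provided I establish, for each SCC $S_j$, a bijection between $\sm{P_{S_j}}$ (parameterized by inputs $I\subseteq\sig{P_{S_j}}\setminus S_j$) and DL-models of $\tr{TOC}{S_j}{P}$ with $\tau(z)=0$, whose restriction to $\sig{P_{S_j}}$ recovers the stable model, then mutual compatibility of DL-models on the shared atoms matches mutual compatibility of stable models, and the module theorem assembles the pieces into the desired visible equivalence. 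This reduces the theorem to a WCP-level generalization of Theorem~\ref{theorem:normal-correctness}.

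For a fixed SCC $S=S_j$, the forward direction proceeds by picking a stable model $M$ of $P_S$ with input $I$, setting $\tau(z)=0$ and $\tau(x_a)=\level{a}$ for $a\in M\isect S$ using the level ranking extracted from the iteration $\iter{\imt{\GLred{P_S}{M}}}{i}{I}$, while $\tau(x_a)=|S|+1$ otherwise. The truth values of $\dep{a}{b}$ and $\gap{a}{b}$ are then forced by \eqref{eq:dep} and \eqref{eq:depdep}; those of $\int{a}{i}$ and $\ext{a}{i}$ follow from \eqref{eq:WCP-def-of-body} and \eqref{eq:WCP-def-of-ext}; and $\app{a}{i}$ is recovered via \eqref{eq:WCP-int-or-ext}. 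Verifying \eqref{eq:WCP-def-of-atom}, \eqref{eq:WCP-deny-gaps}, and \eqref{eq:WCP-reset-ext} then reduces to unfolding the definition of $\imt{\GLred{P_S}{M}}$: the head $a$ is active iff some defining rule fires, and the gap constraint is satisfied by minimality of the fixed-point iteration, a weighted analog of the reasoning following Propositions~\ref{prop:equisatisfiable2} and~\ref{prop:equisatisfiable3}.

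For the backward direction, given a DL-model $\pair{N}{\tau}$ of $\tr{TOC}{S}{P}$ with $\tau(z)=0$, I set $M=N\isect\sig{P_S}$ and $I=M\setminus S$. Supportedness of $M$ under $\imt{\GLred{P_S}{M}}$ follows from \eqref{eq:WCP-def-of-atom}--\eqref{eq:WCP-def-of-ext} when $|S|>1$ and from \eqref{eq:WCP-trivial-body} when $|S|=1$; the pseudo-Boolean reformulation of applicability is justified by Propositions~\ref{prop:equisatisfiable}--\ref{prop:equisatisfiable3}, adapted to arbitrary weights via the $\subseteq_w^{\min}$ characterization, while negative literals are absorbed uniformly by the identity $\sum_{c\in\nbody{r_i}}(\wght{c}\times(1-c))=w_i-\sum_{c\in\nbody{r_i}}(\wght{c}\times c)$, which is precisely why the adjusted bound $\wght{r_i}-w_i$ already appears in the equations. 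Stability then follows by a Fages-style acyclicity argument \citep{Fages94:jmlcs}: the integer ranks induce a well-founded order on $M\isect S$ since \eqref{eq:dep} forces any active internal dependency of $a$ on $b$ to satisfy $\tau(x_b)<\tau(x_a)$, and induction along this order produces the derivation witnessing $M=\lm{\GLred{P_S}{M}\union I}$.

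The main obstacle is keeping the distinction between internal and external support straight in the strong-ranking clause \eqref{eq:WCP-deny-gaps}. Whenever $\ext{a}{i}$ holds, the right-hand disjunct absorbs the constraint and \eqref{eq:WCP-reset-ext} pins $x_a$ at $1$, collapsing all $\dep{a}{\cdot}$ and $\gap{a}{\cdot}$ atoms to false via \eqref{eq:dep}, \eqref{eq:depdep}, and \eqref{eq:levels}; the subtle step is checking that in this regime $\int{a}{i}$ degenerates to exactly $\ext{a}{i}$, so \eqref{eq:WCP-def-of-body} and \eqref{eq:WCP-def-of-ext} remain mutually consistent. When no defining rule of $a$ is externally supportable, the weighted critical-set reasoning of Example~\ref{ex:positive-weight} must be formalized: the minimal $\tau(x_a)$ compatible with \eqref{eq:WCP-def-of-body} and \eqref{eq:WCP-deny-gaps} must coincide with the level assigned by $\imt{\GLred{P_S}{M}}$. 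The remaining visible-equivalence bookkeeping (reading $\tau(z)=0$ as a quotient on satisfying assignments) and the elimination of negatives through the $\compl{c}\lequiv\neg c$ rewriting from Section~\ref{section:negative} preserve SCCs and introduce no further difficulty.
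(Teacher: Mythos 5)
The paper states Theorem~\ref{theorem:WCP-correctness} without any proof, so there is no official argument to compare yours against; I can only assess the sketch on its own terms. Your architecture --- split $F$ and $P$ along SCCs, invoke the module theorem of \cite{OJ08:tplp} to reduce everything to a per-module bijection, construct $\tau$ in the forward direction from the canonical level ranking of $\iter{\imt{\GLred{P_S}{M}}}{i}{I}$, and recover stability in the backward direction by well-founded induction on the ranks --- is exactly what the surrounding text (Theorem~\ref{theorem:normal-correctness}, Propositions~\ref{prop:equisatisfiable}--\ref{prop:equisatisfiable3}, and the discussion after Definition~\ref{def:TOC-of-WCP}) sets up, and your remarks on the $\ext{a}{i}$ disjunct of \eqref{eq:WCP-deny-gaps} and the collapse of $\dep{a}{\cdot}$ and $\gap{a}{\cdot}$ when $x_a=1$ are accurate.

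There is, however, one step you assert rather than prove, and it is the one place where the construction could actually fail. In the backward direction you claim that supportedness of $M$ ``follows from \eqref{eq:WCP-def-of-atom}--\eqref{eq:WCP-def-of-ext}.'' The inclusion $M\subseteq\imt{\GLred{P_S}{M}}(M)$ does follow, since $\dep{a}{b}$ implies $b$; but the converse inclusion --- that $M$ is a classical model of $P_S$, i.e., a satisfied rule body forces a true head --- is \emph{not} immediate, precisely because \eqref{eq:WCP-def-of-body} tests $\dep{a}{b}$ rather than $b$ for $b\in\pbody{r_i}\isect S$. A priori, $N$ could satisfy the body of $r_i$ while the $\dep{a}{b}$ atoms are false, leaving $a$ false without violating any equivalence. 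Closing this requires showing that every true $b\in S$ has $\tau(x_b)\leq|S|$, so that a false head $a$ (pinned to $\tau(x_a)=|S|+1$ by \eqref{eq:levels}) automatically satisfies $x_a>x_b$, making the relevant $\dep{a}{b}$ atoms true and forcing $a$ true after all. That rank cap is exactly where the strong constraints do real work: \eqref{eq:WCP-deny-gaps} forces every internally supported true atom to have some contributing body atom exactly one level below it, and chasing this chain of strictly decreasing ranks down to an externally supported atom (rank $1$ by \eqref{eq:WCP-reset-ext}) bounds the rank of every true atom by $|S|$. The same pinning of $\tau(x_a)$ to $\level{a}$ (weak constraints from below, strong constraints from above, auxiliary atoms fixed by their defining equivalences) is also what upgrades your correspondence to the \emph{bijection} that visible equivalence demands; your sketch gestures at minimality but should make both bounds explicit. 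With these points supplied, the argument goes through.
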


\section{Generalizations Toward Convex Aggregates}
\label{section:convex-aggregates}

Weight rules \eqref{eq:weight-rule} can be generalized by introducing
upper bounds $u$ besides lower bounds $l$:
\begin{equation}
\label{eq:lower-and-upper-bound}
a\IF l\leq\set{\rg{b_1=\wght{b_1}}{,}{b_n=\wght{b_n}},
               \rg{\naf c_1=\wght{c_1}}{,}{\naf c_m=\wght{c_m}}}\leq u\END
\end{equation}
This gives a rise to a \emph{convex} condition which is easier to
explain for positive rules ($m=0$). If the condition can be satisfied
by setting the atoms of $B_1\subseteq\eset{b_1}{b_n}$ true and the
same holds for a superset $B_2\subseteq\eset{b_1}{b_n}$ of $B_1$, then
every intermediate set $B'$ such that $B_1\subseteq B'\subseteq B_2$
satisfies the condition, too. It is easy to check that this is a property
of \eqref{eq:lower-and-upper-bound}, since $B_1\subseteq B_2$ implies
$\wsum{B_1}{\eset{b_1=\wght{b_1}}{b_n=\wght{b_n}}}\leq
 \wsum{B_2}{\eset{b_1=\wght{b_1}}{b_n=\wght{b_n}}}$
in general.
However, the bounds play a role here: upper bounds jeopardize
monotonicity in general but convexity is still guaranteed.
Thus, we use WCPs based on \eqref{eq:lower-and-upper-bound} to
understand the role of level ranking constraints in the context of
convex aggregates. The effect of negative literals is anti-monotonic,
but their semantics is determined by the reduct as
usual. \tconly{Simons et al.}~\cite{SNS02:aij} present a
transformation that checks the upper bound of
\eqref{eq:lower-and-upper-bound} with another weight rule.
The rules below adopt this idea but using a constraint and new atoms
that are in line with \eqref{eq:WCP-trivial-body} and $r_i\in\defof{a}{P}$:
\begin{eqnarray}
\app{a}{i}
& \IF & l\leq\set{\rg{b_1=\wght{b_1}}{,}{b_n=\wght{b_n}},
                  \rg{\naf c_1=\wght{c_1}}{,}{\naf c_m=\wght{c_m}}}\END
\label{eq:lower-bound}
\\
\vub{a}{i}
& \IF & u+1\leq\set{\rg{b_1=\wght{b_1}}{,}{b_n=\wght{b_n}},
                    \rg{\naf c_1=\wght{c_1}}{,}{\naf c_m=\wght{c_m}}}\END
\label{eq:upper-bound}
\\
& \IF & \app{a}{i}\AND\vub{a}{i}\END
\label{eq:check-bounds}
\end{eqnarray}
For the moment, this relaxes the notion of applicability for $r_i$ but
the constraint \eqref{eq:check-bounds} makes sure that the upper bound
is not \emph{violated}. Since no atom depends positively on
$\vub{a}{i}$, the resulting TOC formula is analogous to
\eqref{eq:WCP-trivial-body} with $\vub{a}{i}$ as its head. The
constraint \eqref{eq:check-bounds} can be expressed as
$\neg(\app{a}{i}\land\vub{a}{i})$. The net effect is that if
$\app{a}{i}$ is true, then
$(\sum_{b\in\pbody{r_i}}(\wght{b}\times b)
  -\sum_{c\in\nbody{r_i}}(\wght{c}\times c)
  \leq u-w_i)$
must hold. Thus \eqref{eq:WCP-def-of-body}--\eqref{eq:WCP-def-of-ext}
can be revised for \eqref{eq:lower-and-upper-bound} by replacing the
previous lower bound $\wght{r_i}$ with $l$ and by incorporating
$u-w_i$
into \eqref{eq:WCP-def-of-body} and \eqref{eq:WCP-def-of-ext} as upper
bounds; either as two pseudo-Boolean constraints or a combined one
with two bounds. The respective upper bound does not play a role in
\eqref{eq:WCP-deny-gaps} that concerns the criticality of the lower
bound and this check does not interfere with the satisfaction of the
upper bound due to convexity.

\subsection{Abstract Constraint Atoms}

Based on the preceding analysis of weight rules, we will rephrase our
approach for an arbitrary convex aggregate $\aggr{}{B}$ that takes a
set of (body) atoms $B$ as input and accepts a certain subset
$\cal{S}$ of the powerset $\pset{B}$ by evaluating to true. This set
must satisfy the convexity condition, i.e., if $S_1,S_2\in\cal{S}$
then $S\in\cal{S}$ for each intermediate set $S_1\subseteq S\subseteq
S_2$, too. Moreover, we let $\aggr{*}{B}$ stand for the
\emph{monotonic} (upward) \emph{closure} of $\aggr{}{B}$ based on the
signature $B$.  The set of satisfiers $\mathcal{S}\subseteq\pset{B}$
of the latter is extended to a set of satisfiers
$\mathcal{S}^*=\sel{S}{\exists S'\in\mathcal{S},~S'\subseteq S\subseteq B}$
for the former. The syntax of logic programs can be extended by
introducing aggregated conditions as rule bodies in analogy to
\eqref{eq:cardinality-rule} and \eqref{eq:weight-rule}:
\begin{equation}
a\IF\aggr{}{\rg{b_1}{,}{b_n},\rg{\naf c_1}{,}{\naf c_m}}\END
\label{eq:aggr-rule}
\end{equation}
Let $P$ be a logic program consisting of rules of the form
\eqref{eq:aggr-rule} and $M\subseteq\sig{P}$ a \emph{model} of $P$
that satisfies all rules \eqref{eq:aggr-rule} in the standard sense,
i.e., if the body is satisfied by $M$, then the head $a\in M$.  The
reduct $\GLred{P}{M}$ can be formed by including a positive rule
$a\IF\aggr{*}{\rg{b_1}{,}{b_m},\rg{\csubst{c_1\in M}{\false}{\true}}{,}{\csubst{c_m\in M}{\false}{\true}}}$
for each \eqref{eq:aggr-rule} such that
$M\models\aggr{}{\rg{b_1}{,}{b_n},\rg{\naf c_1}{,}{\naf c_m}}$.
In the above, we exploit \emph{conditional substitutions}
$\csubst{c}{v}{u}$ by the value $v$ if the condition $c$ is true and
the value $u$ otherwise. Thus, given a set of input atoms
$I\subseteq\sig{P}\setminus\heads{P}$,
we can calculate the least model of $\GLred{P}{M}$ and assign level
ranks $\level{a}=i$ of atoms $a\in\heads{P}$ based on the membership
$a\in\iter{\imt{\GLred{P}{M}}}{i}{I}
     \setminus\iter{\imt{\GLred{P}{M}}}{i-1}{I}$
as earlier. Assuming a defining rule $r_i\in\defof{a}{P}$ for an
atom $a\in\heads{P}$ and the scope $S=\sccof{a}$, we generalize
\eqref{eq:WCP-def-of-body}--\eqref{eq:WCP-def-of-ext}
as follows:
\begin{eqnarray}
\int{a}{i} & \!\!\lequiv\!\! &
\aggr{}{\rg{\csubst{b_1\in S}{\dep{a}{b_1}}{b_1}}{,}{\csubst{b_n\in S}{\dep{a}{b_n}}{b_n}},\rg{\neg c_1}{,}{\neg c_m}},
\label{eq:aggr-def-of-body}
\\
\int{a}{i} & \!\!\limpl\!\! &
\neg
\aggr{*}{\rg{\csubst{b_1\in S}{\gap{a}{b_1}}{b_1}}{,}{\csubst{b_n\in S}{\gap{a}{b_n}}{b_n}},\rg{\neg c_1}{,}{\neg c_m}}
\nonumber
\\
           &        &
\lor\,\ext{a}{i},
\label{eq:aggr-deny-gaps}
\\
\ext{a}{i} & \!\!\lequiv\!\! &
\aggr{}{\rg{\csubst{b_1\in S}{\false}{b_1}}{,}{\csubst{b_n\in S}{\false}{b_n}},\rg{\neg c_1}{,}{\neg c_m}}.
\label{eq:aggr-def-of-ext}
\end{eqnarray}
The formulas above assume that the aggregate $\aggr{}{\cdot}$ can be
expressed as a propositional formula, or the target language has
sufficient syntactic primitives available such as pseudo-Boolean
constraints. Conditional substitutions are necessary, because we have
not made any assumptions about ordering the arguments to
$\aggr{}{\cdot}$. For instance, if the weight constraint from
Example~\ref{ex:positive-weight} were abstracted as
$\aggr{}{b_1,b_2,b_3,b_4,b_5}$, the meaning of
$\aggr{}{b_5,b_4,b_3,b_2,b_1}$ would become different due to
asymmetric weights.
As in the case of pure WCPs, the TOC formula
\eqref{eq:aggr-def-of-body} takes care of the (potential) internal
support delivered by $r_i$ and the support is ordered by the conjoined
atoms $\dep{a}{b_i}$ so that the stability of models can be
guaranteed. The formula \eqref{eq:aggr-def-of-ext} for external
support is analogous except that the contribution of positive body
conditions from the same component is hindered by substituting them by
$\false$. By the side of these formulas, the strong ranking constraint
\eqref{eq:aggr-deny-gaps} ensures that $\aggr{}{\cdot}$ could not be
satisfied with arguments derived way before $a$.

\section{Conclusions}
\label{section:conclusions}

In this work, we investigate potential generalizations of
\emph{level ranking constraints} \citep{Niemela08:amai} for extended
rule types involving aggregates. It turns out that the structure
of aggregates can be preserved if the back-end language offers
analogous syntactic primitives for expressing such conditions.
As a consequence, it is not necessary to normalize rules before the
introduction of ranking constraints in a form or another, depending on
the target formalism and the back-end solver to be used for actual
computations. The particular novelty of our approach lies in the
generalization of \emph{strong} level ranking constraints for WCPs and
other monotone/convex aggregates. Using them level rankings can be
made unique which is desirable, e.g., when counting answer sets.
A further by-product is that any WCP can be translated into a
\emph{tight} WCP if the formulas in $\tr{TOC}{}{P}$ are expressed with
rules rather than formulas, also justifying ``\emph{tight}'' as part of
TOC.

Although the results of this article are theoretical by nature, they
enable new kinds of strategies when it comes to implementing the
search of stable models using existing solver technology for SAT, SMT,
and MIP. E.g., the presented TOC formulas offer a common ground for
the translators in the \system{lp2*} family \citep{Janhunen18:ki}.
We leave \emph{non-convex} aggregates as future work for two main
reasons. First, there is no consensus about their semantics when
recursive definitions are enabled \citep{AFG23:tplp}. The ASP-core-2
language standard assumes the \emph{stratified} setting only whereas
the Clingo system implements one particular semantics for recursive
non-convex aggregates \citep{GKS16:corr}.
Second, there is also evidence \citep{AFG15:tplp} that the removal of
non-convex aggregates tends to produce disjunctive rules which go
beyond level rankings in the first place.
One potential solution is provided by the \emph{decomposition}
of non-convex aggregates into their maximal convex regions,
cf.~\citep{Janhunen10:iclp,LGJNY11:lpnmr}.

\paragraph{Acknowledgments}
This research is partially supported by the Academy of Finland projects
AI-ROT (\#335718), XAILOG (\#345633), and ETAIROS (\#352441).


\end{document}